\def\BState{\State\hskip-\ALG@thistlm}
\newcounter{phase}[algorithm]
\newlength{\phaserulewidth}
\newcommand{\setphaserulewidth}{\setlength{\phaserulewidth}}
\newcommand*{\xdash}[1][1.5em]{\rule[0.5ex]{#1}{0.55pt}}
\newcounter{case}[algorithm]
\newlength{\caserulewidth}
\newcommand{\setcaserulewidth}{\setlength{\caserulewidth}}
\def\boxit#1{\vbox{\hrule\hbox{\vrule\kern6pt
			\vbox{\kern6pt#1\kern6pt}\kern6pt\vrule}\hrule}}
\def\bse{\begin{eqnarray*}}
	\def\ese{\end{eqnarray*}}
\def\be{\begin{eqnarray}}
\def\ee{\end{eqnarray}}
\def\bq{\begin{equation}}
\def\eq{\end{equation}}
\def\bse{\begin{eqnarray*}}
	\def\ese{\end{eqnarray*}}
\newcommand{\N}{\mathbb{N}}
\newcommand{\R}{\mathbb{R}}
\newcommand{\hl}{\hspace{0.6mm}}
\newcommand{\hs}{\hspace{0.3mm}}
\newcommand{\cid}{\xrightarrow[\hspace{5mm}]{d}}
\newcommand{\Cov}[2]{\mathrm{Cov}\{#1,\hspace{1mm} #2\}}
\theoremstyle{plain}
\newtheorem{property}{Proposition}
\newtheorem{lem}{Lemma}
\theoremstyle{definition}
\newtheorem{defi}{Definition}
\begin{document}

\thispagestyle{empty} \baselineskip=28pt \vskip 5mm
\begin{center} {\Huge{\bf Visualization and Assessment of Copula Symmetry}}
	
\end{center}

\baselineskip=12pt \vskip 10mm
\begin{center}\large
Cristian F. Jim\'enez-Var\'on\footnote[1]{\baselineskip=10pt Statistics Program,
King Abdullah University of Science and Technology,
Thuwal 23955-6900, Saudi Arabia.\\
E-mail: cristian.jimenezvaron@kaust.edu.sa, marc.genton@kaust.edu.sa, ying.sun@kaust.edu.sa
}, Hao Lee\footnote[2]{\baselineskip=10pt Statistics \& Data Science, Dietrich College of Humanities and Social Sciences, Carnegie Mellon University, United States.}, Marc G.~Genton\textcolor{blue}{$^1$}, and Ying Sun\textcolor{blue}{$^1$}
\end{center}

\baselineskip=17pt \vskip 10mm \centerline{\today} \vskip 15mm

\begin{center}
{\large{\bf Abstract}}
\end{center}

Visualization and assessment of copula structures are crucial for accurately understanding and modeling the dependencies in multivariate data analysis. In this paper, we introduce an innovative method that employs functional boxplots and rank-based testing procedures to evaluate copula symmetry. This approach is specifically designed to assess key characteristics such as reflection symmetry, radial symmetry, and joint symmetry. We first construct test functions for each specific property and then investigate the asymptotic properties of their empirical estimators. We demonstrate that the functional boxplot of these sample test functions serves as an informative visualization tool of a given copula structure, effectively measuring the departure from zero of the test function. Furthermore, we introduce a nonparametric testing procedure to assess the significance of deviations from symmetry, ensuring the accuracy and reliability of our visualization method. Through extensive simulation studies involving various copula models, we demonstrate the effectiveness of our testing approach. Finally, we apply our visualization and testing techniques to two real-world datasets: a nutritional habits survey with five variables and wind speed data from three locations in Saudi Arabia.

\baselineskip=14pt

\par\vfill\noindent
{\bf Key words:} Copula structure; Functional boxplot; Rank-based testing; Symmetry; Visualization.

\clearpage\pagebreak\newpage \pagenumbering{arabic}
\baselineskip=26pt


\newpage
\section{Introduction}
Copula models have gained significant prominence in the field of statistics and data analysis due to their flexibility in modeling intricate dependence structures among random variables \citep{nelsen2006,joe2014,patton2012}. They have become indispensable tools for capturing and understanding various types of dependencies, such as tail dependence, asymmetry, and nonlinearity \citep{genest2007,cherubini2004}. By decoupling the marginal distributions from the dependence structure, copula models offer a powerful framework for accurately characterizing complex multivariate relationships \citep{cherubini2004,joe2014}. In addition to their theoretical significance, copula models have found extensive real-world applications in finance, insurance, and environmental sciences. In finance, for instance, copula models facilitate portfolio optimization, risk management, and pricing of complex financial derivatives by accurately modeling dependencies between financial assets \citep{cherubini2004,patton2012}. 

According to the representation theorem provided by \cite{1959.S.PISUP}, every multivariate cumulative distribution function, $F$, of a continuous random vector $\mathbf{X}=(X_1,\ldots, X_d)^\top$ on $\R^d$, can be written as
\begin{equation}
        F(x_1,\ldots,x_d)=\mathds{P}(X_1\leq x_1,\ldots,X_d \leq x_d)= C\{F_1(x_1),\ldots,F_d(x_d)\},
\label{FX}
\end{equation}
where $F_l(x_l)=\mathds{P}(X_l\leq x_l)$, $x_l\in \R$, are the continuous marginal distributions, $l=1,\ldots,d$, and $C:[0,1]^d \rightarrow [0,1]$ is the unique copula that characterizes the dependence structure of the random vector $\mathbf{X}$ and can be obtained from
\begin{equation}
        C(u_1,\ldots,u_d)=\mathds{P}(U_1\leq u_1,\ldots,U_d \leq u_d)= F\{F_1^{-1}(u_1),\ldots, F_d^{-1}(u_d)\},
    \label{Cu}
\end{equation}
where $F_l^{-1}(u_l)=\text{inf}\{x|F_l(x)\geq u_l\}$, $u_l\in [0,1]$, is the quantile function of $F_l$ and $\mathbf{U}=(U_1,\ldots,U_d)^\top$ with $U_l=F_l(X_l)$.  From \eqref{Cu}, the copula function $C$ serves as a cumulative distribution function for the random vector $\mathbf{U}$, residing within the $d$-dimensional unit hypercube and characterized by its marginal distributions. In practical applications, the representation provided in Equation \eqref{FX} enables the modeling of the dependence structure, given the knowledge of the marginal distributions. This can be achieved by selecting an appropriate parametric copula model from a wide range of options available in the literature \citep[see e.g.,][]{nelsen2006,joe2014}.

Choosing an appropriate copula model is a challenging task when quantifying dependence. In various practical applications, such as actuarial science, finance, and survival analysis \citep{nelsen2006,patton2006,Aas2009}, the common approach has been to rely on expert knowledge or choose a copula model based on mathematical convenience rather than its suitability for the specific data application. However, this approach can introduce limitations and biases in the analysis \citep{Mikosch2006,nelsen2006,Aas2009,joe2014}.

Several existing approaches in the literature for copula model selection are based on goodness-of-fit tests for copulas \citep{genest2007,Genest2009, Berg2009}. These methods typically treat the univariate marginal distribution as an infinite-dimensional nuisance parameter and replace the observations with maximally invariant statistics, such as ranks. 

Understanding the properties and structure of copulas is crucial for capturing and interpreting the relationships between random variables. Various methods have been proposed in the literature to specify and test copula structures.  \cite{Jaworski2010} introduced a test for the associativity structure of copulas based on the asymptotic distribution of the pointwise copula estimator. However, this test only assesses associativity at a specific point rather than for the entire copula process, as discussed in \cite{BUCHER2012}. \cite{BUCHER2012} derived Cram\'er-von Mises and Kolmogorov-Smirnov type test statistics for evaluating the characteristics of associativity. Additionally, they developed test statistics for Archimedean copulas \citep{BUCHER2012}. \cite{Bucher2011} proposed a test for extreme value dependence based on the minimum weighted $L^2$-distance of extreme-value copulas. The bivariate symmetry test for copulas, based on Cram\'er-von Mises and Kolmogorov-Smirnov functionals of the rank-based empirical copula process, was introduced by \cite{2012.GNQ.AISM} and \cite{2014.GN.SP}.

\cite{2013.LG.JASA} proposed a nonparametric method for identifying copula symmetry using the asymptotic distribution of the empirical copula process. \cite{Quessy2016} developed a statistical framework based on quadratic functionals to test the identity of copulas from a multivariate distribution.
More recently, \cite{Jaser2021} proposed simpler nonparametric tests for the symmetry and radial symmetry of bivariate copulas. Their approach involves creating two bivariate samples by manipulating the underlying copula while preserving its dependence structure. The test statistics are based on the difference between the empirical Kendall's tau of both samples.

In this paper, we present a new approach for visualizing and testing the structure of copula models, specifically focusing on properties such as symmetry, radial symmetry, and joint symmetry as defined in \cite{1993.N.JNS}. Our approach complements existing goodness-of-fit tests for copula model selection. To visualize these copula structures, we employ the functional boxplot introduced by \cite{2011.SG.JCGS} as a visual tool to quantify the deviations from a given copula structure by measuring the departure from zero of sample test functions. We demonstrate that these visualizations offer insights into the extent to which specific copula structures are adhered to.

Additionally, we introduce a nonparametric testing procedure to assess the significance of deviations from symmetry. This testing procedure is motivated by the techniques proposed by \cite{2019.HS.SS} and \cite{2023.HSG.JCGS}, which utilize a functional data framework to visualize and assess spatio-temporal covariance properties in both univariate and multivariate cases. We evaluate the effectiveness of our proposed testing approach through extensive simulation studies involving various copula models.

The paper is structured as follows. In Section~\ref{Methods}, we outline the copula symmetry of interest, including reflection symmetry, radial symmetry, and joint symmetry. We also provide details on the visualization and nonparametric testing procedures for each of these copula structures. Section~\ref{Simulation} presents the simulation results regarding the size and power of our proposed nonparametric test. In Section~\ref{data_app}, we apply our methods to two real-world datasets: a nutritional habits survey with five variables and wind speed data from three locations in Saudi Arabia. Finally, the paper concludes with a discussion in Section~\ref{Discuss}.

\section{Methodology}\label{Methods}

In Section~\ref{cop_sym}, we introduce copula symmetries. Section~\ref{sec:test_f} covers the construction of test functions and provides asymptotic results for proper estimators. We present the visualization of test functions with functional boxplots in Section~\ref{Vis:FB}. Lastly, in Section~\ref{rank-test}, we describe a rank-based testing procedure for copula symmetry.

\subsection{Copula Symmetry}\label{cop_sym}
Our discussion centers on the symmetry of bivariate copulas. Unlike the case of univariate functions, the concept of symmetry is not uniquely defined in a multivariate setting. Therefore, different notions of symmetry have been investigated in the context of copulas. Here we focus on the ones presented in~\cite{1993.N.JNS}. 
\begin{defi} \label{def:sym}
A copula $C$ is said to be \textit{symmetric} if 
\begin{equation} \label{eq:sym}
		C (u, v) - C(v, u) = 0, \hspace{5mm} \forall (u, v) \in [0, 1]^2.
\end{equation}
\end{defi}
Based on the algebraic Equation~\eqref{eq:sym}, one should notice that for any symmetric copula $C$, its distribution is symmetric with respect to the diagonal connecting the origin and the point $(1, 1)$. Thus, the symmetry in Definition~\ref{def:sym} is called \textit{reflection symmetry} in some literature. We will also use reflection symmetry to refer to this type of symmetry in the sequel. 

\begin{defi} \label{def:rsym}
A copula $C$ is said to be \textit{radially symmetric} if 
\begin{equation} \label{eq:rsym}
		C (u, v) - C(1 - u, 1 - v) + 1 - u - v = 0, \hspace{5mm} \forall (u, v) \in [0, 1]^2.
	\end{equation}
\end{defi}
Equivalently, one can state that the radial symmetry property as $C(u, v) - C^\ast(u, v) = 0$ for all $(u, v) \in [0, 1]^2$, where $C^\ast$ stands for the survival copula associated with $C$, i.e., for all $(u, v) \in [0, 1]^2$, $C^\ast (u, v) = C(1 - u, 1 - v) - 1 + u + v$. Pointed out by~\cite{1993.N.JNS}, there exist copulas that are reflection symmetric
but not radially symmetric and, conversely, copulas that are
radially symmetric but not reflection symmetric.
\begin{defi} \label{def:jsym}
A copula $C$ is said to be \textit{jointly symmetric} if it satisfies  
\begin{equation*}
		C (u, v) + C(u, 1 - v) - u = 0 
	\end{equation*}
	and 
	\begin{equation*}
		C (u, v) + C(1- u, v) - v = 0
	\end{equation*}
	for all $ (u, v) \in [0, 1]^2$. 
\end{defi}
One can easily show that joint symmetry implies radial symmetry. However, there is no implication between reflection symmetry and joint symmetry. The Figure 1 of \cite{2013.LG.JASA} provides the interrelations among the three types of symmetry. 

\subsection{Test functions} \label{sec:test_f}

We propose to assess and visualize the three types of copula symmetries by the construction of test functions. First, we focus on the construction of test functions specifically for reflection symmetry (S). For any fixed $v \in [0, 1]$, define the reflection symmetry test functions $f^{^{S}}_{v}, g^{^{\hspace{-0.3mm}S}}_{v}: [0,1] \rightarrow [-1, 1]$ by 
\begin{equation*}
    f^{^{S}}_{v} (t) = C(t, v) - C(v, t) \text{\hspace{5mm}and\hspace{5mm}} g^{^{\hspace{-0.3mm}S}}_{v} (t) = - f^{^{S}}_{v} (t)
\end{equation*}
for all $t \in [0, 1]$. If $C$ is reflection symmetric, we have $f^{^{S}}_{v} = g^{^{\hspace{-0.3mm}S}}_{v} = 0$; otherwise, the values of $f^{^{S}}_{v}$ and $g^{^{\hspace{-0.3mm}S}}_{v}$ vary with respect to $t$. For all $m \in \N$, any set $\{v_1, \ldots, v_m, w_1, \ldots, w_m\} \subseteq [0, 1]$ can, in the same manner, induce $2m$ reflection symmetry test functions $f^{^{S}}_{v_1}, \ldots, f^{^{S}}_{v_m}$ and $ g^{^{\hspace{-0.3mm}S}}_{w_1}, \ldots, g^{^{\hspace{-0.3mm}S}}_{w_m}$. 

In practice, we need proper estimators $\widehat{f}^{^{\hspace{0.4mm}S}}_{v_1}, \ldots, \widehat{f}^{^{\hspace{0.4mm}S}}_{v_m}$ and $\widehat{g}^{^{\hspace{0.5mm}S}}_{w_1}, \ldots, \widehat{g}^{^{\hspace{0.4mm}S}}_{w_m}$ of the test functions $f^{^{S}}_{v_1}, \ldots, f^{^{S}}_{v_m}$ and $ g^{^{\hspace{-0.3mm}S}}_{w_1}, \ldots, g^{^{\hspace{-0.3mm}S}}_{w_m}$. Intuitively, the estimators can be obtained through a linear combination of the corresponding empirical copulas. To provide a clearer definition of these estimators, we briefly summarize the important asymptotic results of the two-dimensional empirical copula process \citep[see, e.g.,][]{1979.D.CRASP,1984.S.AP,2005.T.CJS}. Additional recent results on convergence rate can be found in \cite{2010.GS.JMA}, \cite{2012.S.Bernoulli}, \cite{2013.SA.SPL}, and the references therein.

Let $C$ be a $2$-dimensional copula and $\bm{U} = (U_1, U_2)^\top$ be a random vector that is $C$-distributed. If one has direct access to a random sample $\bm{U}_1, \ldots, \bm{U}_n$ of size $n$, the copula $C$ can be estimated by the empirical copula process defined by 
\begin{equation*}
    \widehat{C}_n(u, v) = \frac{1}{n} \sum_{i = 1}^n \mathbbm{1} \big \{U_{i1} \le u, U_{i2} \le v \big \}, \hspace{5mm} \forall (u, v) \in [0,1]^2, 
\end{equation*}

\noindent where $\mathbbm{1}(\cdot)$ denotes the indicator function. As a well-known result, we have that for all $(u, v) \in [0, 1]^2$,
\begin{equation} \label{eq:ec1}
    \mathbb{C}_n (u, v) := \sqrt{n} \left \{ \widehat{C}_n(u, v) - C(u, v) \right \} \cid \mathbb{C} (u, v) \text{\hspace{2mm}in\hspace{2mm}} \ell^\infty ([0 ,1]^2), \hspace{5mm} n \rightarrow \infty, 
\end{equation} 
 
\noindent where $\ell^\infty ([0, 1]^2)$ denotes the space of all the bounded functions over the compact set $[0, 1]^2$ and $\mathbb{C}$ is a $2$-dimensional pinned $C$-Brownian sheet, i.e., it is a centered Gaussian random field with the covariance function given by
\begin{equation*}
\Cov{\mathbb{C} (u, v)}{\mathbb{C}(x, y)} = C(u \land x, v \land y ) - C(u, v) C(x, y), \hspace{5mm} \forall u, v, x, y \in [0, 1]
\end{equation*}
where for all $a, b \in \R$, $a \land b = \min\{a, b\}$ \citep{2012.GNQ.AISM}. However, it is often the case that the $n$ observations we have are generated from a random vector $\bm{X} = (X_1, X_2)^\top$ that is not necessarily uniformly distributed over the interval $[0,1]$. The representation theorem \citep{1959.S.PISUP} states that it can be expressed as the composition of a copula $C$ and marginals of $\bm{X}$. In this case, from every $\bm{X}_i$ in the random sample, one can estimate a  $\bm{U}_i$ by the pseudo-observation $\widehat{\bm{U}}_i = (\widehat{U}_{i1}, \widehat{U}_{i2})^\top$, where for all $s \in \{1, 2\}$, 
\begin{equation*}
    \widehat{U}_{is} = \frac{1}{n} \sum_{r = 1}^n \mathbbm{1} \big \{ X_{rs} \le X_{is} \big \}.
\end{equation*}
With all of these $\widehat{\bm{U}}_i$'s, one can estimate $C$ by
\begin{equation*}
    \widehat{D}_n(u, v) = \frac{1}{n} \sum_{i = 1}^n \mathbbm{1} \big \{\widehat{U}_{i1} \le u, \widehat{U}_{i2} \le v \big \}, \hspace{5mm} \forall (u, v) \in [0,1]^2. 
\end{equation*}

It has been shown that when $C$ is \textit{regular} (see Definition 1 in \cite{2012.GNQ.AISM} for example), or loosely speaking, when $C$ is differentiable with continuous partials, we have
\begin{multline} \label{eq:ec2}
    \widehat{\mathbb{D}}_n (u, v) := \sqrt{n} \Big \{\widehat{D}_n (u, v) - C(u, v) \Big \} \\
     \cid \mathbb{C}(u, v) - \dot{C}_1 (u, v) \mathbb{C} (u, 1) - \dot{C}_2 (u, v) \mathbb{C} (1, v) \text{\hspace{2mm}in\hspace{2mm}} \ell^\infty ([0 ,1]^2), \hspace{5mm} n \rightarrow \infty,
\end{multline}
where $\dot{C}_1$ and $\dot{C}_2$ denote the partial derivatives of the copula $C$ with respect to its first and second variables, respectively. In the sequel, if not otherwise stated, we always impose the regularity assumption on the underlying copula $C$. To sum up, the estimator of the reflection symmetry test function $\widehat{f}^{^{\hspace{0.5mm}S}}_{v}$ can be defined by either 
\begin{equation*}
    \widehat{f}^{^{\hspace{0.5mm}S}}_{v} (t) = \widehat{C}_n (t, v) - \widehat{C}_n (v, t), \hspace{5mm} \forall t \in [0,1],
\end{equation*}
or 
\begin{equation*}
    \widehat{f}^{^{\hspace{0.5mm}S}}_{v} (t) = \widehat{D}_n (t, v) - \widehat{D}_n (v, t), \hspace{5mm} \forall t \in [0,1], 
\end{equation*}
depending on the type of the data set we have, and we simply set $\widehat{g}^{^{\hspace{0.4mm}S}}_v(t) = -\widehat{f}^{^{\hspace{0.5mm}S}}_{v}(t)$. 

\begin{property}\label{prop:a1}
Given any fixed $v \in [0, 1]$, the estimator $\widehat{f}^{^{\hspace{0.5mm}S}}_{v}(t)$ satisfies that for all $t \in [0, 1]$, 
\begin{equation*}
    \sqrt{n} \left \{ \widehat{f}^{^{\hspace{0.5mm}S}}_{v}(t) -  f^{^{S}}_{v} (t) \right \} \cid 
    \begin{dcases}
        \mathbb{E}^{^{S}}_v (t) & \text{if } \widehat{f}^{^{\hspace{0.5mm}S}}_{v} (t) = \widehat{C}_n (t, v) - \widehat{C}_n (v, t)\\
        \widehat{\mathbb{E}}^{^S}_v (t) & \text{if } \widehat{f}^{^{\hspace{0.5mm}S}}_{v} (t) = \widehat{D}_n (t, v) - \widehat{D}_n (v, t)\\
    \end{dcases}
\end{equation*}
in $\ell^\infty([0, 1])$ as $n \rightarrow \infty$, where $\mathbb{E}^{^{S}}_v$ and $\widehat{\mathbb{E}}^{^S}_v$ are two centered Gaussian random fields. 
\end{property}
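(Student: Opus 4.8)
The plan is to deduce the statement directly from the weak convergence results~\eqref{eq:ec1} and~\eqref{eq:ec2} for the two empirical copula processes, together with the continuous mapping theorem, after observing that, for each fixed $v$, the estimator $\widehat{f}^{^{\hspace{0.5mm}S}}_{v}$ is obtained from the relevant empirical copula through one and the same \emph{bounded linear} operator. Concretely, for fixed $v \in [0,1]$ introduce $\Phi_v : \ell^\infty([0,1]^2) \to \ell^\infty([0,1])$ by $(\Phi_v h)(t) = h(t, v) - h(v, t)$. Since $\|\Phi_v h\|_\infty \le 2 \|h\|_\infty$, the operator $\Phi_v$ is bounded, hence continuous, and in particular it does map into $\ell^\infty([0,1])$; continuity is all that is needed to invoke the continuous mapping theorem in the form appropriate for weak convergence of the (possibly non-Borel-measurable) empirical copula process.

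For the first case, since $f^{^{S}}_{v}(t) = C(t,v) - C(v,t)$, we have for all $t$
\begin{equation*}
\sqrt{n}\left\{\widehat{f}^{^{\hspace{0.5mm}S}}_{v}(t) - f^{^{S}}_{v}(t)\right\} = \sqrt{n}\left\{\widehat{C}_n(t,v) - C(t,v)\right\} - \sqrt{n}\left\{\widehat{C}_n(v,t) - C(v,t)\right\} = (\Phi_v \mathbb{C}_n)(t),
\end{equation*}
so~\eqref{eq:ec1} and the continuous mapping theorem yield weak convergence in $\ell^\infty([0,1])$ to $\mathbb{E}^{^{S}}_v := \Phi_v \mathbb{C}$, that is, $\mathbb{E}^{^{S}}_v(t) = \mathbb{C}(t,v) - \mathbb{C}(v,t)$; being the image of the centered Gaussian sheet $\mathbb{C}$ under a linear map, $\mathbb{E}^{^{S}}_v$ is a centered Gaussian field. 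For the second case, the regularity assumption on $C$ makes~\eqref{eq:ec2} available; writing $\mathbb{G}(u,v) := \mathbb{C}(u,v) - \dot{C}_1(u,v)\mathbb{C}(u,1) - \dot{C}_2(u,v)\mathbb{C}(1,v)$ for the limit process there, the same identity with $\widehat{D}_n$ replacing $\widehat{C}_n$ gives $\sqrt{n}\{\widehat{f}^{^{\hspace{0.5mm}S}}_{v}(t) - f^{^{S}}_{v}(t)\} = (\Phi_v \widehat{\mathbb{D}}_n)(t) \cid (\Phi_v \mathbb{G})(t) =: \widehat{\mathbb{E}}^{^S}_v(t)$ in $\ell^\infty([0,1])$, again a centered Gaussian field by linearity.

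The argument is essentially a one-line application of the continuous mapping theorem, so there is no substantial obstacle; the only points deserving care are the verification that $\Phi_v$ is well defined and continuous on $\ell^\infty([0,1]^2)$ (done above via the norm bound) and being precise about using the version of the continuous mapping theorem valid for weak convergence of maps that need not be Borel measurable. To round off the statement one would, if desired, record the covariance kernels explicitly by expanding $\Cov{\mathbb{E}^{^{S}}_v(s)}{\mathbb{E}^{^{S}}_v(t)}$ and $\Cov{\widehat{\mathbb{E}}^{^S}_v(s)}{\widehat{\mathbb{E}}^{^S}_v(t)}$ from the covariance of $\mathbb{C}$ given above (the second kernel additionally involving the partials $\dot{C}_1$ and $\dot{C}_2$); under reflection symmetry, where $f^{^{S}}_{v} \equiv 0$, these kernels describe the null limiting law exploited in the testing procedure of Section~\ref{rank-test}.
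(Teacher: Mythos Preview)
Your proposal is correct and follows essentially the same route as the paper: both define the same linear map $\Phi_v$ (the paper calls it $\mathcal{T}^S_v$) on $\ell^\infty([0,1]^2)$, note its continuity, and apply the continuous mapping theorem to the weak convergence results~\eqref{eq:ec1} and~\eqref{eq:ec2}. Your write-up is in fact slightly more careful in justifying continuity via the explicit norm bound and in flagging the measurability caveat for the Hoffmann--J{\o}rgensen framework.
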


\noindent The proof of this Proposition can be found in Appendix~\ref{A:prop1}.

\begin{table}[htb!]
\centering
    \begin{tabular}{c|c|c}
    \hline
    \\[-0.8em]
         \textbf{Type} & \textbf{Access} & \textbf{Test Functions} \\
         \\[-0.8em]
         \hline
          \\[-0.8em]
         
         \multirow{3}{*}{S} & $(U_1, U_2)$ &
         $\widehat{f}^{^{\hspace{0.5mm}S}}_{v} (t) = \widehat{C}_n (t, v) - \widehat{C}_n (v, t), \hspace{5mm} \widehat{g}^{^{\hspace{0.4mm}S}}_v (t) = - \widehat{f}^{^{\hspace{0.5mm}S}}_{v} (t)$ \\[1ex]
        \cmidrule{2-3}
        
         & $(X_1, X_2)$ & $\widehat{f}^{^{\hspace{0.5mm}S}}_{v} (t) = \widehat{D}_n (t, v) - \widehat{D}_n (v, t)\hspace{5mm} \widehat{g}^{^{\hspace{0.4mm}S}}_v (t) = - \widehat{f}^{^{\hspace{0.5mm}S}}_{v} (t)$ \\[1ex]
         \hline
         \\[-0.8em]
         
         \multirow{3}{*}{R} & $(U_1, U_2)$ & $\widehat{f}^{^{\hspace{0.5mm}R}}_{v} (t) = \widehat{C}_n (t, v) - \widehat{C}_n (1-t, 1-v) + 1 - t - v, \hspace{5mm} \widehat{g}^{^{\hspace{0.4mm}R}}_v (t) = - \widehat{f}^{^{\hspace{0.5mm}R}}_{v} (t)$ \\[1ex]
        \cmidrule{2-3}
        
        & $(X_1, X_2)$ &  $\widehat{f}^{^{\hspace{0.5mm}R}}_{v} (t) = \widehat{D}_n (t, v) - \widehat{D}_n (1-t, 1-v) + 1 - t - v, \hspace{5mm} \widehat{g}^{^{\hspace{0.4mm}R}}_v (t) = - \widehat{f}^{^{\hspace{0.5mm}R}}_{v} (t)$\\[1ex]
         \hline
         \\[-0.8em]
         
        \multirow{7}{*}{J} & \multirow{3}{*}{$(U_1, U_2)$} & $\widehat{f}^{^{\hspace{0.5mm}J, 1}}_{v} (t) = \widehat{C}_n (t, v) + \widehat{C}_n (t, 1 - v) - t, \hspace{5mm} \widehat{g}^{^{\hspace{0.4mm}J, 1}}_v (t) = - \widehat{f}^{^{\hspace{0.5mm}J, 1}}_{v} (t)$ \\[1ex]
         \cmidrule{3-3}
         
        & & $\widehat{f}^{^{\hspace{0.5mm}J, 2}}_{v} (t) = \widehat{C}_n (t, v) + \widehat{C}_n (1 - t, v) - v, \hspace{5mm} \widehat{g}^{^{\hspace{0.4mm}J, 2}}_v (t) = - \widehat{f}^{^{\hspace{0.5mm}J, 2}}_{v} (t)$ \\[1ex]
         \cmidrule{2-3}
         
        & \multirow{3}{*}{$(X_1, X_2)$} & $\widehat{f}^{^{\hspace{0.5mm}J, 1}}_{v} (t) = \widehat{D}_n (t, v) + \widehat{D}_n (t, 1 - v) - t, \hspace{5mm} \widehat{g}^{^{\hspace{0.4mm}J, 1}}_v (t) = - \widehat{f}^{^{\hspace{0.5mm}J, 1}}_{v} (t)$ \\[1ex]
         \cmidrule{3-3}
         
         & & $\widehat{f}^{^{\hspace{0.5mm}J, 2}}_{v} (t) = \widehat{D}_n (t, v) + \widehat{D}_n (t, 1 - v) - t, \hspace{5mm} \widehat{g}^{^{\hspace{0.4mm}J, 2}}_v (t) = - \widehat{f}^{^{\hspace{0.5mm}J, 2}}_{v} (t)$ \\[1ex]
         \hline
    \end{tabular}
    \caption{Definitions of the estimators of the test functions for different types of symmetry.}
    \label{tb:tf}
\end{table}

Similarly, we can construct radial symmetry (R) test functions, denoted as $\widehat{f}^{^{\hspace{0.5mm}R}}_{v}(t)$ and $\widehat{g}^{^{\hspace{0.4mm}R}}_{v}(t)$, based on Definition~\ref{def:rsym} to estimate $f^{^{R}}_v(t)$ and $g^{^{\hspace{-0.3mm}R}}_v(t)$, as described in Table~\ref{tb:tf}. The convergence of these estimators is presented in Proposition~\ref{prop:a2}.

\begin{property} \label{prop:a2}
Given any fixed $v \in [0, 1]$, the estimator $\widehat{f}^{^{\hspace{0.5mm}R}}_{v}(t)$ satisfies that for all $t \in [0, 1]$, 
\begin{equation*}
    \sqrt{n} \left \{ \widehat{f}^{^{\hspace{0.5mm}R}}_{v}(t) -  f^{^{R}}_{v} (t) \right \} \cid 
    \begin{dcases}
        \mathbb{E}^{^R}_v (t) & \text{if } \widehat{f}^{^{\hspace{0.5mm}R}}_{v} (t) = \widehat{C}_n (t, v) - \widehat{C}_n (1-t, 1-v) + 1 - t - v\\
        \widehat{\mathbb{E}}^{^R}_v (t) & \text{if } \widehat{f}^{^{\hspace{0.5mm}R}}_{v} (t) = \widehat{D}_n (t, v) - \widehat{D}_n (1-t, 1-v) + 1 - t - v
    \end{dcases}
\end{equation*}
in $\ell^\infty ([0, 1])$ as $n \rightarrow \infty$, where $\mathbb{E}^{^R}_v$ and $\widehat{\mathbb{E}}^{^R}_v$ are two centered Gaussian random fields. 
\end{property}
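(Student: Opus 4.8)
The plan is to reduce Proposition~\ref{prop:a2} to the already-established weak convergence results for the empirical copula processes in \eqref{eq:ec1} and \eqref{eq:ec2}, exactly as was done for Proposition~\ref{prop:a1}. Fix $v \in [0,1]$. Since $f^{^{R}}_{v}(t) = C(t,v) - C(1-t,1-v) + 1 - t - v$ and the deterministic terms $1 - t - v$ cancel in the difference $\widehat{f}^{^{\hspace{0.5mm}R}}_{v}(t) - f^{^{R}}_{v}(t)$, we have in the first case
\begin{equation*}
\sqrt{n}\left\{\widehat{f}^{^{\hspace{0.5mm}R}}_{v}(t) - f^{^{R}}_{v}(t)\right\} = \mathbb{C}_n(t,v) - \mathbb{C}_n(1-t,1-v),
\end{equation*}
and in the second case the same identity with $\mathbb{C}_n$ replaced by $\widehat{\mathbb{D}}_n$. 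So the statement is an immediate consequence of the continuous mapping theorem applied to the map $\Phi: \ell^\infty([0,1]^2) \to \ell^\infty([0,1])$ defined by $\Phi(h)(t) = h(t,v) - h(1-t,1-v)$, which is linear and bounded (hence continuous) for each fixed $v$.

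First I would record that $\Phi$ is a bounded linear operator: $\|\Phi(h)\|_\infty \le 2\|h\|_\infty$, so continuity is automatic. Next I would apply the continuous mapping theorem to \eqref{eq:ec1}, giving $\Phi(\mathbb{C}_n) \cid \Phi(\mathbb{C})$ in $\ell^\infty([0,1])$, and set $\mathbb{E}^{^R}_v(t) := \mathbb{C}(t,v) - \mathbb{C}(1-t,1-v)$; since a bounded linear image of a centered Gaussian process is again a centered Gaussian process, $\mathbb{E}^{^R}_v$ has the claimed form. Then I would apply the same argument to \eqref{eq:ec2}: writing $\mathbb{D}(u,v) := \mathbb{C}(u,v) - \dot{C}_1(u,v)\mathbb{C}(u,1) - \dot{C}_2(u,v)\mathbb{C}(1,v)$ for the limit process there, I set $\widehat{\mathbb{E}}^{^R}_v(t) := \mathbb{D}(t,v) - \mathbb{D}(1-t,1-v) = \Phi(\mathbb{D})(t)$, which is again centered Gaussian. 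If desired, one can make the covariance structure explicit by substituting the covariance of $\mathbb{C}$ (and, in the second case, of $\mathbb{D}$) into $\Cov{\mathbb{E}^{^R}_v(s)}{\mathbb{E}^{^R}_v(t)}$ expanded via bilinearity, but this is a routine computation.

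There is essentially no serious obstacle here; the only point requiring a word of care is the regularity assumption. For the second case (access to $(X_1,X_2)$ and pseudo-observations), the weak convergence \eqref{eq:ec2} requires $C$ to be \emph{regular} in the sense of \cite{2012.GNQ.AISM} so that the partial derivatives $\dot{C}_1, \dot{C}_2$ exist and are continuous; this is the standing assumption already imposed in the excerpt, so it may simply be invoked. The first case needs no such assumption since \eqref{eq:ec1} holds for any copula. I would therefore present the proof as: (i) state $\Phi$ and verify boundedness; (ii) invoke \eqref{eq:ec1} (resp. \eqref{eq:ec2}) plus the continuous mapping theorem; (iii) identify the limit as $\Phi(\mathbb{C})$ (resp. $\Phi(\mathbb{D})$) and note it is centered Gaussian; and, optionally, (iv) give the covariance kernel in closed form. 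The entire argument parallels the proof of Proposition~\ref{prop:a1} verbatim with the obvious substitution of the radial-symmetry functional for the reflection-symmetry one, so in the appendix I would likely just refer back to that proof.
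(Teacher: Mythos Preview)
Your proposal is correct and follows essentially the same approach as the paper: both arguments reduce the claim to the weak convergence results \eqref{eq:ec1} and \eqref{eq:ec2} via the continuous mapping theorem applied to the linear functional $h \mapsto \big(t \mapsto h(t,v) - h(1-t,1-v)\big)$, identifying the limits as $\mathbb{C}(t,v) - \mathbb{C}(1-t,1-v)$ and its pseudo-observation analogue. Your version is slightly more explicit about the boundedness of $\Phi$ and the role of the regularity assumption, but there is no substantive difference.
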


\noindent The proof of this Proposition can be found in Appendix~\ref{A:prop2}.

As for the test functions for joint symmetry (J), one should investigate the two properties in Definition~\ref{def:jsym} separately. Hence, for any given $v \in [0,1]$, we construct four population joint symmetry test functions: $f^{^{J, 1}}_v, f^{^{J, 2}}_v, g^{^{\hspace{-0.3mm}J, 1}}_v, g^{^{\hspace{-0.3mm}J, 2}}_v : [0, 1] \rightarrow [-1, 1]$ that are given by 
\begin{equation*}
    f^{^{J, 1}}_v (t) = C(t, v) + C(t, 1 - v) - t \text{\hspace{5mm} and \hspace{5mm}} g^{^{\hspace{-0.3mm}J, 1}}_v (t) = - f^{^{J, 1}}_v (t)
\end{equation*}
as well as 
\begin{equation*}
    f^{^{J, 2}}_v (t) = C( 1- t, v) + C(t, 1 - v) - v \text{\hspace{5mm} and \hspace{5mm}} g^{^{\hspace{-0.3mm}J, 2}}_v (t) = - f^{^{J, 2}}_v (t)
\end{equation*}
for all $t \in [0, 1]$. As the cases of other symmetries, they can be estimated by the test functions involving empirical copulas (see Table~\ref{tb:tf} for their definitions), whose relevant asymptotic results are presented in Proposition~\ref{prop:a3}.

\begin{property} \label{prop:a3}
Given $v \in [0, 1]$, the estimator $\widehat{f}^{^{\hspace{0.5mm}J, 1}}_{v}(t)$ satisfies that for all $t \in [0, 1]$, 
\begin{equation*}
    \sqrt{n} \left \{ \widehat{f}^{^{\hspace{0.5mm}J, 1}}_{v}(t) -  f^{^{J, 1}}_{v} (t) \right \} \cid 
    \begin{dcases}
        \mathbb{E}^{^{J, 1}}_v (t) & \text{if } \widehat{f}^{^{\hspace{0.5mm}J, 1}}_{v} (t) = \widehat{C}_n (t, v) +  \widehat{C}_n (t, 1-v) - t \\
        \widehat{\mathbb{E}}^{^{J, 1}}_v (t) & \text{if } \widehat{f}^{^{\hspace{0.5mm}J, 1}}_{v} (t) = \widehat{D}_n (t, v) +  \widehat{D}_n (t, 1-v) - t
    \end{dcases}
\end{equation*}
in $\ell^\infty ([0, 1])$ as $n \rightarrow \infty$, where $\mathbb{E}^{^{J, 1}}_v$ and $\widehat{\mathbb{E}}^{^{J, 1}}_v$ are two centered Gaussian random fields. 

\noindent Similarly, given $v \in [0, 1]$, the estimator $\widehat{f}^{^{\hspace{0.5mm}J, 2}}_v(t)$ satisfies that for all $t \in [0, 1]$, 
\begin{equation*}
    \sqrt{n} \left \{ \widehat{f}^{^{\hspace{0.5mm}J, 2}}_{v}(t) -  f^{^{J, 2}}_{v} (t) \right \} \cid 
    \begin{dcases}
        \mathbb{E}^{^{J, 2}}_v (t) & \text{if } \widehat{f}^{^{\hspace{0.5mm}J, 2}}_{v} (t) = \widehat{C}_n (t, v) +  \widehat{C}_n (1 - v, t) - v \\
        \widehat{\mathbb{E}}^{^{J, 2}}_v (t) & \text{if } \widehat{f}^{^{\hspace{0.5mm}J, 2}}_{v} (t) = \widehat{D}_n (t, v) +  \widehat{D}_n (1 -v, t) - v
    \end{dcases}
\end{equation*}
in $\ell^\infty ([0, 1])$ as $n \rightarrow \infty$, where $\mathbb{E}^{^{J, 2}}_v$ and $\widehat{\mathbb{E}}^{^{J, 2}}_v$ are two centered Gaussian random fields. 
\end{property}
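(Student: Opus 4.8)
The plan is to mirror the proofs of Propositions~\ref{prop:a1} and~\ref{prop:a2}: rewrite each of the four estimators in Proposition~\ref{prop:a3}, after centering at its population counterpart and rescaling by $\sqrt n$, as the image of the empirical copula process under a fixed bounded linear operator on $\ell^\infty([0,1]^2)$, and then apply the continuous mapping theorem to~\eqref{eq:ec1} (when the data are accessed as $(U_1,U_2)$) or to~\eqref{eq:ec2} (when the data are accessed as $(X_1,X_2)$ through pseudo-observations, which invokes the standing regularity assumption on $C$).

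First I would treat $\widehat{f}^{J,1}_v$. The population function $f^{J,1}_v(t)=C(t,v)+C(t,1-v)-t$ is a fixed linear combination of values of $C$ at argument pairs obtained from $(t,v)$ by coordinatewise reflections, plus a deterministic affine term in $t$; the estimator replaces $C$ by $\widehat{C}_n$ (or $\widehat{D}_n$) but keeps the same deterministic term, so that term cancels in the difference:
\begin{equation*}
\sqrt n\big\{\widehat{f}^{J,1}_v(t)-f^{J,1}_v(t)\big\}=\mathbb{C}_n(t,v)+\mathbb{C}_n(t,1-v),\qquad t\in[0,1],
\end{equation*}
with $\widehat{\mathbb{D}}_n$ in place of $\mathbb{C}_n$ in the $(X_1,X_2)$ case. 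Defining $\Phi_1\colon\ell^\infty([0,1]^2)\to\ell^\infty([0,1])$ by $\Phi_1(\alpha)(t)=\alpha(t,v)+\alpha(t,1-v)$, one has $\|\Phi_1(\alpha)\|_\infty\le 2\|\alpha\|_\infty$, so $\Phi_1$ is bounded linear, hence continuous. The continuous mapping theorem and~\eqref{eq:ec1} (resp.~\eqref{eq:ec2}) then give
\begin{equation*}
\sqrt n\big\{\widehat{f}^{J,1}_v(t)-f^{J,1}_v(t)\big\}\cid \mathbb{E}^{J,1}_v(t):=\mathbb{C}(t,v)+\mathbb{C}(t,1-v)
\end{equation*}
in $\ell^\infty([0,1])$, and analogously $\widehat{\mathbb{E}}^{J,1}_v$ with $\mathbb{C}$ replaced by the limit field of~\eqref{eq:ec2}. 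Both limits are centered Gaussian, being continuous linear images of the centered Gaussian field $\mathbb{C}$; their covariance kernels follow by bilinearity from $\Cov{\mathbb{C}(u,v)}{\mathbb{C}(x,y)}=C(u\wedge x,v\wedge y)-C(u,v)C(x,y)$; for instance,
\begin{equation*}
\Cov{\mathbb{E}^{J,1}_v(s)}{\mathbb{E}^{J,1}_v(t)}=\!\!\sum_{a,b\in\{v,1-v\}}\!\!\big\{C(s\wedge t,a\wedge b)-C(s,a)C(t,b)\big\},
\end{equation*}
with the analogous but longer expression, involving $\dot{C}_1$ and $\dot{C}_2$, for $\widehat{\mathbb{E}}^{J,1}_v$.

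The case of $\widehat{f}^{J,2}_v$ is identical in structure, the only change being that one copula evaluation carries a reflection in the first argument, so the operator becomes $\Phi_2(\alpha)(t)=\alpha(1-t,v)+\alpha(t,1-v)$; since $t\mapsto 1-t$ is an isometry of $[0,1]$, $\Phi_2$ is again bounded linear, and the continuous mapping theorem delivers the stated convergence to centered Gaussian fields $\mathbb{E}^{J,2}_v$ and $\widehat{\mathbb{E}}^{J,2}_v$ with covariances computed the same way. Finally one records that $\mathbb{C}$ --- and, under regularity of $C$, the limit field in~\eqref{eq:ec2} --- has a.s.\ continuous, hence bounded, sample paths on $[0,1]^2$, a property preserved by $\Phi_1$ and $\Phi_2$, so each limit is a tight, Borel-measurable element of $\ell^\infty([0,1])$, which makes the displayed convergences meaningful.

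I do not expect a serious obstacle: the proof is short precisely because the substantive input --- the functional central limit theorems~\eqref{eq:ec1} and~\eqref{eq:ec2} for the empirical copula process --- is assumed. The points that need care rather than cleverness are (i) reading~\eqref{eq:ec1}--\eqref{eq:ec2} in the Hoffmann--J\o rgensen sense, so that the continuous mapping theorem applies despite the non-measurability of empirical processes in $\ell^\infty$; (ii) checking continuity of $\Phi_1$ and $\Phi_2$, including the reflection component; and (iii) invoking the regularity of $C$ in the pseudo-observation case so that~\eqref{eq:ec2} holds. All of these are handled exactly as in the proofs of Propositions~\ref{prop:a1} and~\ref{prop:a2}.
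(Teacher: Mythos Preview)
Your proposal is correct and follows essentially the same route as the paper's own proof: rewrite the centered and rescaled estimator as a fixed continuous linear functional of the empirical copula process $\mathbb{C}_n$ (or $\widehat{\mathbb{D}}_n$), then invoke the continuous mapping theorem together with~\eqref{eq:ec1} or~\eqref{eq:ec2} to obtain the centered Gaussian limit. Your write-up is in fact slightly more careful than the paper's (you make explicit the boundedness of $\Phi_1,\Phi_2$, the Hoffmann--J{\o}rgensen framework, and the tightness of the limit), but the underlying argument is identical.
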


\noindent The proof of this Proposition can be found in Appendix~\ref{A:prop3}.

\subsection{Visualization}\label{Vis:FB}

The visualization of test functions can be achieved using functional boxplots \citep{2011.SG.JCGS}. To construct the functional boxplot of $\widehat{f}^{^{\hspace{0.5mm}S}}_{v_1}, \ldots, \widehat{f}^{^{\hspace{0.5mm}S}}_{v_m}, \widehat{g}^{^{\hspace{0.5mm}S}}_{w_1}, \ldots, \widehat{g}^{^{\hspace{0.5mm}S}}_{w_m}$, we need to discretize the interval $[0, 1]$ into an evenly spaced range of $p$ points for some properly chosen $p \in \N$ and then evaluate the test functions $\widehat{f}^{^{\hspace{0.5mm}S}}_{v_1}, \ldots, \widehat{f}^{^{\hspace{0.5mm}S}}_{v_m}, \widehat{g}^{^{\hspace{0.4mm}S}}_{w_1}, \ldots, \widehat{g}^{^{\hspace{0.4mm}S}}_{w_m}$ over them. We denote the range of $p$ points by $\{0 \leq t_1, \ldots, t_p \leq 1\}$. According to Proposition~\ref{prop:a1}, for every $j \in \{1, \ldots, m\}$, the joint distribution of the random vector $\big ( \widehat{f}^{^{\hspace{0.5mm}S}}_{v_j} (t_1) - f^{^{S}}_{v_j} (t_1), \ldots, \widehat{f}^{^{\hspace{0.5mm}S}}_{v_j} (t_p) - f^{^{S}}_{v_j} (t_p) \big )^\top$ is close to a multivariate normal distribution with mean zero when the sample size $n$ is sufficiently large, and so is the joint distribution of the random vector $\big ( \widehat{g}^{^{\hspace{0.4mm}S}}_{w_j} (t_1) - g^{^{\hspace{-0.3mm}S}}_{w_j} (t_1), \ldots, \widehat{g}^{^{\hspace{0.4mm}S}}_{w_j} (t_p) - g^{^{\hspace{-0.3mm}S}}_{w_j} (t_p) \big )^\top$. Thus, the functional boxplot should be fairly centered around zero (even though its shape should not be expected as a horizontal band since the variance at each point $t_k, k \in \{1, \ldots, p\},$ can vary based on the underlying copula $C$). If the underlying copula $C$ is reflection symmetric, it implies the equalities $f^{^{S}}_{v_1} \equiv \cdots \equiv f^{^{S}}_{v_m} \equiv g^{^{\hspace{-0.3mm}S}}_{w_1} \equiv \cdots \equiv g^{^{\hspace{-0.3mm}S}}_{w_m} \equiv 0$. Otherwise, the functional boxplot should be of an irregular shape that is visually deviated from zero. In the special case when $v_i \approx w_i$ for a great proportion of $i \in \{1, \ldots, n\}$, the functional boxplot may form an irregular envelope that is almost symmetric with respect to zero. 

To assess radial symmetry, we analyze the functional boxplots of $\widehat{f}^{^{\hspace{0.5mm}R}}_{v_1}, \ldots, \widehat{f}^{^{\hspace{0.5mm}R}}_{v_m}, \widehat{g}^{^{\hspace{0.4mm}R}}_{w_1}, \ldots, \widehat{g}^{^{\hspace{0.4mm}R}}_{w_m}$. According to Proposition~\ref{prop:a2}, the functional boxplots should be fairly centered around zero if and only if the underlying copula $C$ is indeed radial symmetric. For joint symmetry, two types of test functions are defined, resulting in two functional boxplots per copula. Proposition~\ref{prop:a3} states that both functional boxplots should be centered around zero if the underlying copula is jointly symmetric. 

To stress the interpretation, we demonstrate the functional boxplots based on various copulas models in Figure~\ref{fig:vs2}. The gradation of the colors used in the plots manifests the density of functional data: the darker the colors are, the more functional curves are located in place. The corresponding structure(s) of each copula is flagged by the abbreviation(s) in bold: \textbf{S} (reflection symmetry), \textbf{R} (radial symmetry), \textbf{J} (joint symmetry) and \textbf{AS} (asymmetry) in the title of their respective functional boxplots. 

\begin{figure}[b!]
    \centering
    \includegraphics[width=\textwidth, height=15cm]{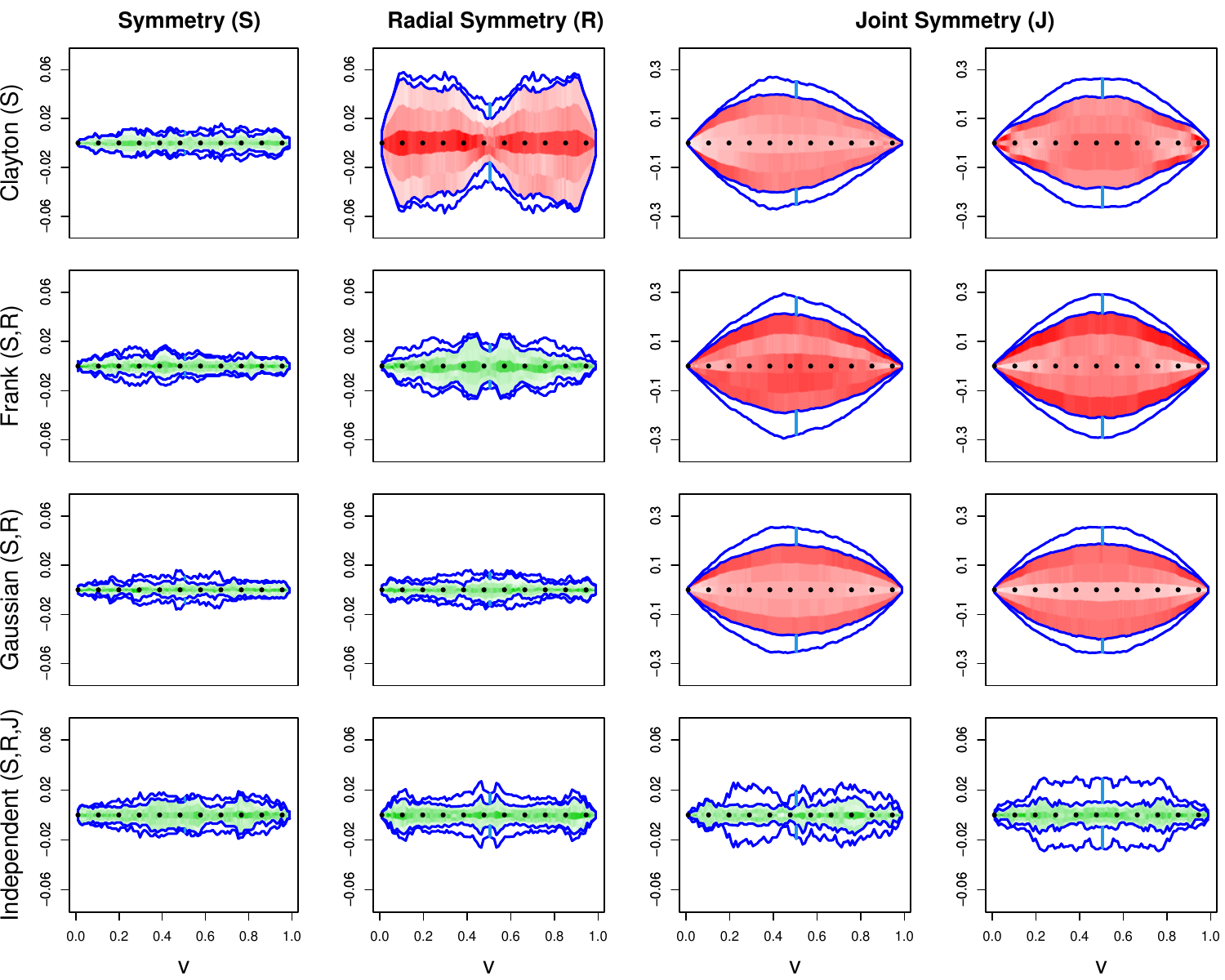}
    \label{fig:vs1}
\end{figure}

\begin{figure}[ht!]
    \centering
    \includegraphics[width=\textwidth, height=15cm]{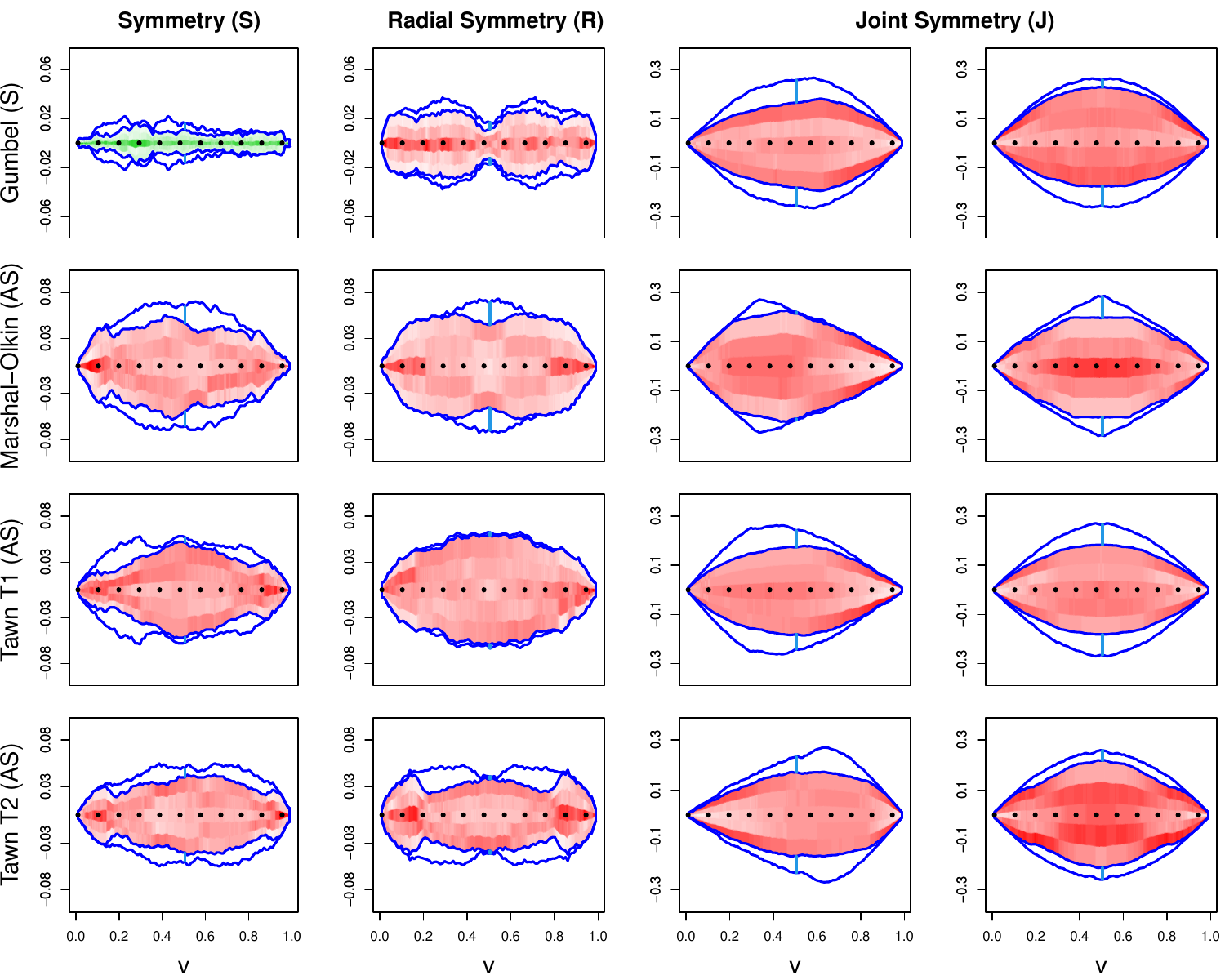}
    \caption{\textbf{Functional boxplots for the visualization of copula symmetries}. The parameter(s) of each copula is chosen to make its Kendall's tau equal/close to $0.5$. Especially, the Marshall-Olkin copula takes the parameters $(0.55, 0.85)$, and the Tawn copula, regardless of its types, takes the parameters $(4.28, 0.60)$. The corresponding test functions are constructed with $n = 1000, m = 250$ and $p = 100$, and the values of $v_j, w_j$ for all $j \in \{1, \ldots, m\}$ are randomly chosen from the unit interval $[0,1]$. Note that the plots of copulas that inherit the specified symmetries are featured by green-centered regions, whereas those who disobey the structures are featured by red ones.}
    \label{fig:vs2}
\end{figure}

\subsection{Rank-based hypothesis testing procedure}\label{rank-test}

In Section~\ref{Vis:FB}, we explored the visualization of copula structures present in a given sample or dataset. We discussed the relevant asymptotic results and demonstrated how functional boxplots of selected test functions can provide intuitive indications of various symmetries. To thoroughly investigate and quantify these symmetries, we introduce a one-sample ranked-based hypothesis testing procedure. This procedure is a modification of the methods proposed by \cite{2019.HS.SS} and \cite{2023.HSG.JCGS} for testing the separability and symmetry of univariate and multivariate covariance functions. Our adapted approach allows us to assess the presence of symmetries in copula structures in a robust and statistically rigorous manner. Both of these methods can be considered as adaptations of the two-sample rank-based test proposed by \cite{2009.LR.JASA}. The original test is designed to determine whether two sets of functional data are derived from the same distribution. In a similar manner, we modify this test to examine reflection symmetry as the initial step. Subsequently, by making specific adjustments, the procedure can be extended to test radial and joint symmetry.

Suppose that we have the observations $\bm{U}_i$'s or the pseudo-observations $\widehat{\bm{U}}_i$'s for $i = 1, \ldots, n$. As expected, for $v \in [0,1]$ the null $\mathcal{H}_0$ and the alternative $\mathcal{H}_a$ are
\begin{itemize}[leftmargin=1.8cm]
\itemsep-2mm

   \item[$\bullet\hspace{2mm}\mathcal{H}_0 \hspace{2mm}$:] $f^{^{S}}_{v} = g^{^{\hspace{-0.3mm}S}}_{v} = 0~$;
   \item[$\bullet\hspace{2mm}\mathcal{H}_a\hspace{2mm}$:] $f^{^{S}}_{v} = g^{^{\hspace{-0.3mm}S}}_{v}\neq 0$. 
\end{itemize}

The details of the procedure are demonstrated as follows: 
\begin{enumerate}[leftmargin=*,labelindent=2mm,label= \textit{Step \arabic*}:]
\itemsep-2mm
    \item Estimate the values of the reflection symmetry test functions $\widehat{f}^{^{\hspace{0.5mm}S}}_{v_1}, \ldots, \widehat{f}^{^{\hspace{0.5mm}S}}_{v_m}, \widehat{g}^{^{\hspace{0.4mm}S}}_{w_1}, \ldots, \widehat{g}^{^{\hspace{0.4mm}S}}_{w_m}$ over the evenly spaced points $\{0\leq t_1, \ldots, t_p \leq 1\} $ as in Table~\ref{tb:tf} by $\bm{U}_i$'s, or $\widehat{\bm{U}}_i$'s, where for any $j \in \{1, \ldots, m\}$, $v_j$ and $w_j$ are randomly generated from the unit interval $[0,1]$. 
    
    \item Simulate from a reflection symmetric copula to obtain a set of $n$ observations, $\bm{V}^{^{\mathcal{H}_0}}_{i},\hspace{1mm} i \in \{ 1, \ldots, n\}$. Further details on how to simulate the observations $\bm{V}^{^{\mathcal{H}_0}}_{i}$ are described in Section~\ref{sim:h0}.
    
    \item Estimate the values of the reflection symmetry test functions $\widehat{f}^{^{\hspace{1mm}\mathcal{H}_0}}_{v_1}, \ldots, \widehat{f}^{^{\hspace{1mm} \mathcal{H}_0}}_{v_{m_0}}, \widehat{g}^{^{\hspace{1mm}\mathcal{H}_0}}_{w_1}, \ldots, \widehat{g}^{^{\hspace{1mm} \mathcal{H}_0}}_{w_{m_0}}$ over the evenly spaced points $\{0\leq t_1, \ldots, t_p \leq 1\}$ as in Table~\ref{tb:tf} by $\bm{V}^{\mathcal{H}_0}_{i}$'s, where for any $j^{0} \in \{1, \ldots, m_0\}$, $v_{j^{0}}$ and $w_{j^{0}}$ are randomly generated from the unit interval $[0,1]$.
    
    \item Combine the values of $\widehat{f}^{^{\hspace{0.5mm}S}}_{v_1}, \ldots, \widehat{f}^{^{\hspace{0.5mm}S}}_{v_m}, \widehat{g}^{^{\hspace{0.4mm}S}}_{w_1}, \ldots, \widehat{g}^{^{\hspace{0.4mm}S}}_{w_m}$ with those of $\widehat{f}^{^{\hspace{1mm}\mathcal{H}_0}}_{v_1}, \ldots, \widehat{f}^{^{\hspace{1mm} \mathcal{H}_0}}_{v_{m_0}}, \widehat{g}^{^{\hspace{1mm}\mathcal{H}_0}}_{w_1}, \ldots, \widehat{g}^{^{\hspace{1mm} \mathcal{H}_0}}_{w_{m_0}}$ to form a discretized functional data set of size $2(m + m_0)$. Compute their modified band depths \citep [see e.g.,][]{2009.LR.JASA,2012.SGN.Stat}. 
    
    \item Rank the $2(m + m_0)$ test functions according to their depth values. In case of any ties, we assign distinct ordinal numbers at random to the test functions that compare equal in terms of the depth values. Suppose that $\widehat{f}^{^{\hspace{0.5mm}S}}_{v_1}, \ldots, \widehat{f}^{^{\hspace{0.5mm}S}}_{v_m}, \widehat{g}^{^{\hspace{0.4mm}S}}_{w_1}, \ldots, \widehat{g}^{^{\hspace{0.4mm}S}}_{w_m}$ are associated with the ranks $r_1, \ldots, r_{2m_0}$. Define the test statistic $W = \sum_{i = 1}^{2m_0} r_i$. 
\end{enumerate}
The null hypothesis $\mathcal{H}_0$ is rejected when $W$ is significantly small because it means that the test functions $\widehat{f}^{^{\hspace{0.5mm}S}}_{v_1}, \ldots, \widehat{f}^{^{\hspace{0.5mm}S}}_{v_m}, \widehat{g}^{^{\hspace{0.4mm}S}}_{w_1}, \ldots, \widehat{g}^{^{\hspace{0.4mm}S}}_{w_m}$ are more deviated from zero. The definition of the test statistic $W$ here takes the essence of the Wilcoxon, or equivalently Mann-Whitney, test statistic. Hence, one can also deem the proposed hypothesis testing as a modification of the \textit{two-sample} Wilcoxon rank-sum test for one functional data set. The null distribution of $W$ is estimated by $N_b$ bootstrap samples of size $n$. More specifically, we generate $N_b$ samples from the reflection symmetry copula in \textit{Step 2}. For the $b$-th sample, $b \in \{1, \ldots, N_b\}$, we regard it as a set of $\bm{U}_i$'s and follow the above procedure to calculate the corresponding test statistic $W_b$. Eventually, the null distribution of $W$ is approximated by all of these test statistics: $W_1, \ldots, W_{N_b}$. For the details on how to carry out such a \textit{one-sample} bootstrapping method in hypothesis testing, we refer to Section 16.4 in~\cite{1993.ET.CHNY}.

Some features of the procedure are worth further discussion. First, the number of observations simulated from a reflection symmetry copula needs to be the same as the original sample size $n$. Otherwise, even if the test functions $\widehat{f}^{^{\hspace{0.5mm}S}}_{v_1}, \ldots, \widehat{f}^{^{\hspace{0.5mm}S}}_{v_m}, \widehat{g}^{^{\hspace{0.4mm}S}}_{w_1}, \ldots, \widehat{g}^{^{\hspace{0.4mm}S}}_{w_m}$ came from a reflection symmetry copula, they would still be more centered/deviated with respect to zero, thereby having greater depth values, compared to $\widehat{f}^{^{\hspace{1mm}\mathcal{H}_0}}_{v_1}, \ldots, \widehat{f}^{^{\hspace{1mm} \mathcal{H}_0}}_{v_{m_0}}, \widehat{g}^{^{\hspace{1mm}\mathcal{H}_0}}_{w_1}, \ldots, \widehat{g}^{^{\hspace{1mm} \mathcal{H}_0}}_{w_{m_0}}$. This is because larger or smaller sample sizes make empirical copulas better or worse approximations of the underlying true copulas. 

Second, the values of $p$, $m$, and $m_0$ in \textit{Steps 1} and \textit{3} are hyperparameters that one can tune at will. Nonetheless, it is observed that setting $p = 100$ is sufficient, while $m$ and $m_0$ have better to be large to guarantee the ideal empirical size and power of the test. Particularly, we notice that having $m$ and $m_0$ greater than $n$ yields better performances. Once both $m$ and $m_0$ satisfy the property, their values tend to have little effect on its performance.  

Third, the most notable difference between our procedure and those used in \cite{2009.LR.JASA},  \cite{2019.HS.SS} and \cite{2023.HSG.JCGS} lies in the absence of a reference data set. The ideas in their approaches originate with the hypothesis testing procedure designed by~\cite{1993.LS.JASA} that performs the detection of (dis)similarity in two multivariate distributions using \textit{quality index}. Particularly, they introduced the treatment with a reference data set to identify the change in population locations. In the context of functional data, it describes the scenario when the two samples of functional curves appear to form two separated bands, each of which comprises only curves from one group as shown in Figure~\ref{fig:reason4rd}. However, it should not be a concern in our testing procedure thanks to the introduction of $\widehat{g}^{^{\hspace{0.4mm}S}}_{w_1}, \ldots, \widehat{g}^{^{\hspace{0.4mm}S}}_{w_m}$ which are rough reflections of the test functions $\widehat{f}^{^{\hspace{0.4mm}S}}_{v_1}, \ldots, \widehat{f}^{^{\hspace{0.4mm}S}}_{v_m}$ with respect to zero. As one may see later, we actually generate the observations $\bm{V}^{\mathcal{H}_0}_{i}, i \in \{1, \ldots, n\}$ from a mixture ($M$) of empirical copulas in \textit{Step 2} under the assumption that $M$ is close to a truly reflection symmetry copula. Then, if we had to simulate a reference data set, the empirical copula of the reference data set might be more symmetric than that of the observations $\bm{V}^{\mathcal{H}_0}_{i}, i \in \{1, \ldots, n\}$ when the sample size $n$ is not sufficiently large. This could lead to a not-so-small test statistic in the approach of~\cite{2019.HS.SS} or~\cite{2023.HSG.JCGS} as the test functions constructed from the reference data set might tend to be slightly more centered around zero. Certainly, the absence of a reference data set also saves memory storage. 

\begin{figure}[htb!]
   \caption{ In the context of functional data, the approach of \cite{1993.LS.JASA} can be interpreted as the scenario where two samples of functional curves exhibit a distinct pattern. Specifically, the curves from each group form separate bands, with little to no overlap between the two groups.}
    \centering
    \includegraphics[height=7.5cm, width=0.6\textwidth]{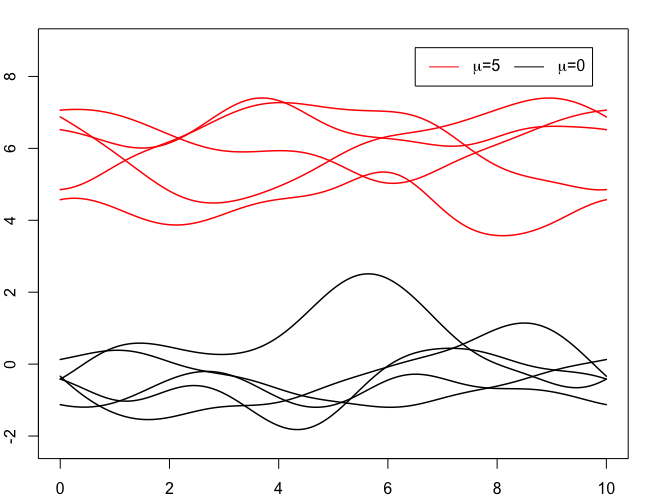}
    \label{fig:reason4rd}
\end{figure}

\subsubsection{Simulation under $\mathcal{H}_0$}\label{sim:h0}
Lastly, it is important to address the choice of the reflection symmetric copula in Step 2 and explain how to simulate from it. In order to maintain the nonparametric nature of the procedure, it is crucial to avoid selecting any parametric copula. In lieu, we propose to construct the estimated reflection symmetric copula $\widehat{C}^{S}$ given, for all $(u, v)$, by either
\begin{equation*}
    \widehat{C}^{S} (u, v) = \frac{1}{2} \Big ( \widehat{C}_n (u, v) + \widehat{C}_n (v, u)\Big )
\end{equation*}
or 
\begin{equation*}
    \widehat{C}^{S} (u, v) = \frac{1}{2} \Big ( \widehat{D}_n (u, v) + \widehat{D}_n (v, u)\Big ),
\end{equation*}
depending on the types of data set we have.

\begin{property}\label{prop:symc} 
	Given any bivariate copula $C$, the copula $\tilde{C}$ defined, for all $(u, v) \in [0, 1]^2$, by 
	\begin{equation*}
		\tilde{C}(u, v) = \frac{1}{2} \Big ( C (u, v) +  C(v, u) \Big )
	\end{equation*}
	is a reflection symmetric copula. 
\end{property}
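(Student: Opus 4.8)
The plan is to verify directly that $\tilde{C}$ satisfies the three defining conditions of a bivariate copula --- groundedness, uniform univariate margins, and the $2$-increasing (rectangle) inequality --- and then to observe that reflection symmetry is immediate from the algebraic form of $\tilde{C}$.

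First I would recall that the transpose $C^{\top}(u,v) := C(v,u)$ of any bivariate copula $C$ is again a copula: if $(U_1,U_2)^{\top}$ is $C$-distributed, then $(U_2,U_1)^{\top}$ is $C^{\top}$-distributed, so $C^{\top}$ inherits every copula property. Consequently $\tilde{C} = \tfrac12 C + \tfrac12 C^{\top}$ is a convex combination of two copulas. Since the class of copulas is closed under convex combinations --- groundedness and the two margin conditions are linear constraints, and for $u_1 \le u_2$, $v_1 \le v_2$ the rectangle increment of $\tilde{C}$ equals $\tfrac12$ times the (nonnegative) rectangle increment of $C$ plus $\tfrac12$ times that of $C^{\top}$, hence is nonnegative --- it follows that $\tilde{C}$ is a copula.

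For completeness one can also check the conditions by hand: $\tilde{C}(u,0) = \tfrac12\{C(u,0)+C(0,u)\} = 0$ and likewise $\tilde{C}(0,v) = 0$ give groundedness; $\tilde{C}(u,1) = \tfrac12\{C(u,1)+C(1,u)\} = \tfrac12(u+u) = u$ and symmetrically $\tilde{C}(1,v) = v$ give the uniform margins; the $2$-increasing property is the averaging argument above. Then reflection symmetry in the sense of Definition~\ref{def:sym} follows at once: for every $(u,v) \in [0,1]^2$,
\[
\tilde{C}(u,v) - \tilde{C}(v,u) = \tfrac12\{C(u,v)+C(v,u)\} - \tfrac12\{C(v,u)+C(u,v)\} = 0.
\]

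I do not anticipate a genuine obstacle here; the statement is essentially a bookkeeping exercise. The only point meriting a moment's care is the justification that the transpose of a copula is a copula (equivalently, that the rectangle inequality is preserved under the coordinate swap), after which the result reduces to the elementary fact that copulas form a convex set and to a one-line cancellation for the symmetry.
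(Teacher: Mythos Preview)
Your proposal is correct and follows essentially the same route as the paper: both argue that the transpose $C^{\top}(u,v)=C(v,u)$ is a copula because it is the distribution of the swapped pair $(U_2,U_1)$, then use closure of copulas under averaging (the paper isolates this as a separate lemma, whereas you verify it inline), and conclude reflection symmetry by the obvious cancellation. There is nothing to add.
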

\noindent We prove this result in the Appendix~\ref{A:prop4}.

As a mixture of two empirical copulas, one should have no difficulty simulating from $\widehat{C}^{S}$. Note that $\widehat{C}^{S}$ is not really a copula itself, but it serves as an appropriate estimation of the reflection symmetric copula $\tilde{C}$ \citep{Ruschendorf1976}. Apart from the intuition suggested by Proposition~\ref{prop:symc},  Figure~\ref{fig:simulated_copula_1} provides a visual assurance for the simulation method. 

\begin{figure}[b!]
    \centering
    \includegraphics[width=\textwidth]{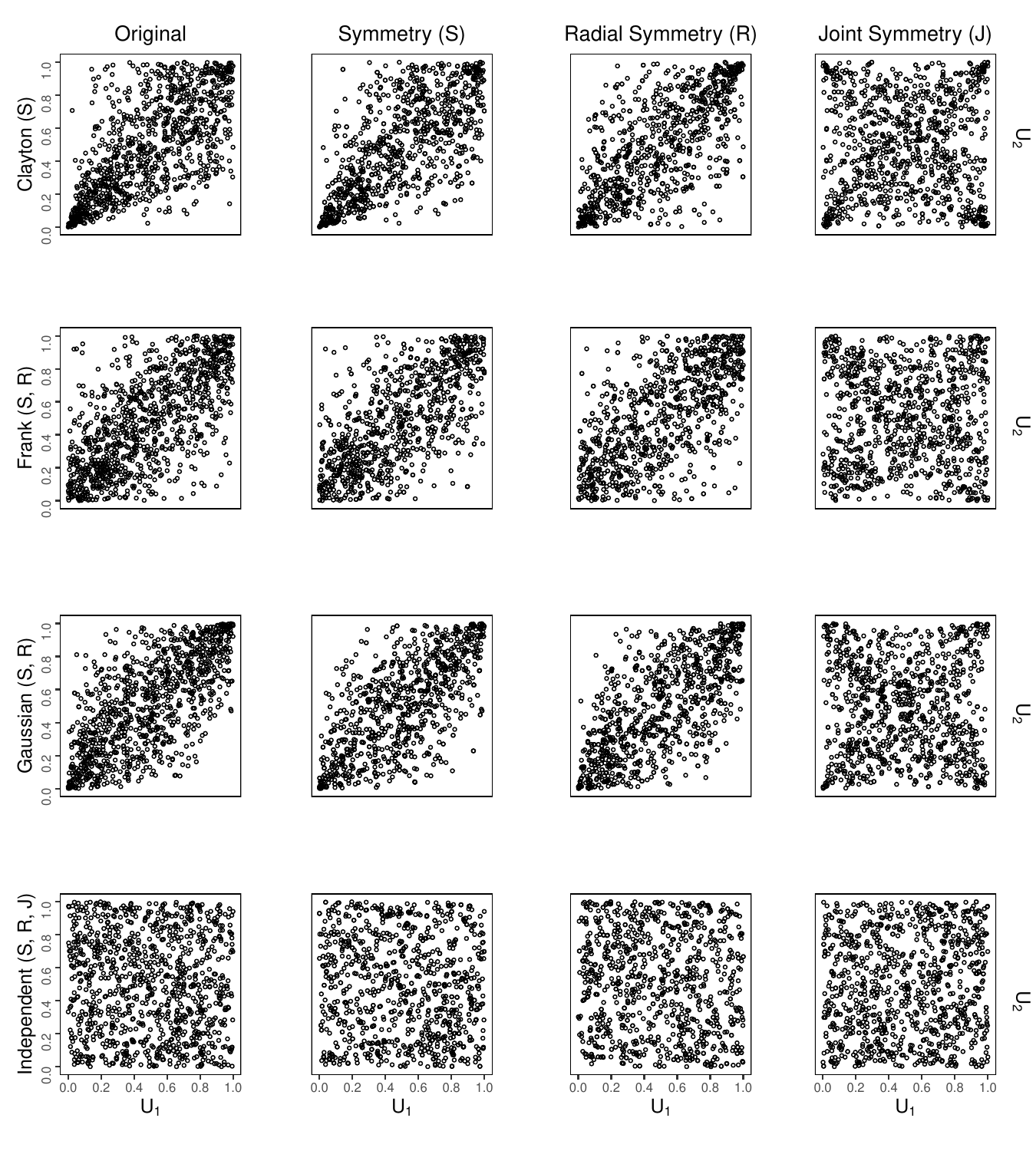}
\end{figure}
\begin{figure}[ht!]
    \centering
    \includegraphics[width=\textwidth]{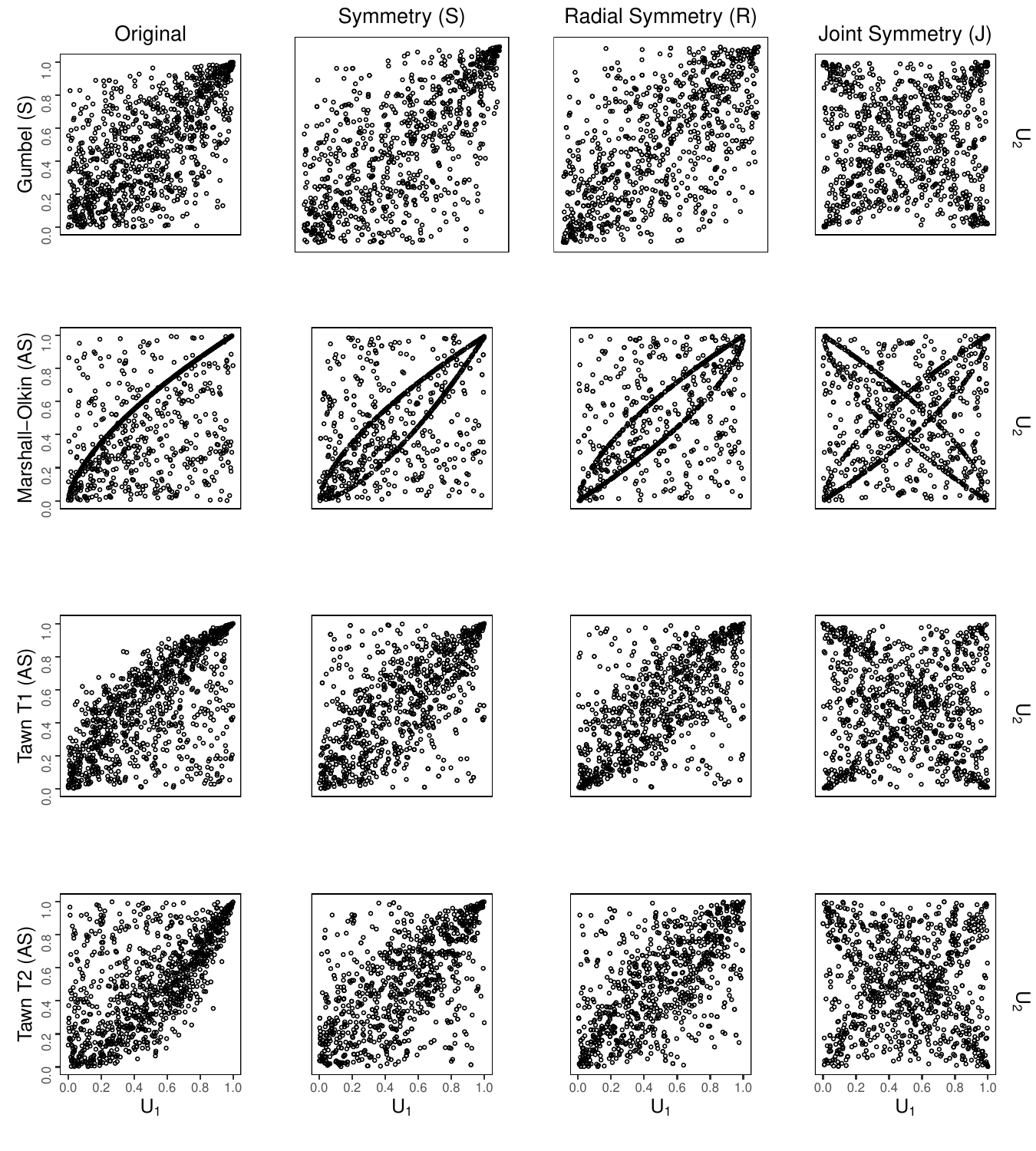}
    \caption{Rank plots for the different copula models considered in Section~\ref{Vis:FB}}
    \label{fig:simulated_copula_1}
\end{figure}

By closely following the aforementioned procedure and making appropriate adjustments, one can conduct tests for radial and joint symmetries. Specifically, in Step 2, a copula that exhibits radial or joint symmetry needs to be constructed. Table~\ref{tb:cpls} provides a summary of the proposed estimated copulas for this purpose, and the following Propositions provide justification for these choices. To ensure clarity in our notation, we denote the survival copula of any copula $C$ as $C^\ast$. Proofs of the Propositions~\ref{prop:rsymc} and \ref{prop:jsymc} can be found in the Appendices~\ref{A:prop5} and \ref{A:prop6}, respectively.

\begin{property} \label{prop:rsymc}
	Given any bivariate copula $C$, the copula $\tilde{C}$ defined, for all $(u, v) \in [0, 1]^2$, by 
	\begin{equation*}
		\tilde{C}(u, v) = \frac{1}{2} \Big ( C(u, v) + C^\ast(u, v) \Big)
	\end{equation*}
	is a radially symmetric copula. 
\end{property}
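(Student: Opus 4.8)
The plan is to verify the two defining requirements in turn: that $\tilde C$ is a genuine copula, and that it satisfies the radial symmetry identity of Definition~\ref{def:rsym}.

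For the first requirement, I would invoke two standard facts. First, for any bivariate copula $C$, the associated survival function $C^\ast(u,v) = C(1-u,1-v) - 1 + u + v$ is itself a copula: if $(U_1, U_2) \sim C$, then $C^\ast$ is the joint distribution function of $(1-U_1, 1-U_2)$, so it inherits uniform margins, groundedness, and the $2$-increasing property. Second, the set of bivariate copulas is convex, since groundedness, uniform margins, and the rectangle inequality defining $2$-increasingness are all preserved under convex combinations. Hence $\tilde C = \tfrac12 C + \tfrac12 C^\ast$ is a copula.

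For the second requirement, the cleanest route is to observe that the survival transformation $C \mapsto C^\ast$ is both an involution, $C^{\ast\ast} = C$, and affine in the sense that $\bigl(\tfrac12 C_1 + \tfrac12 C_2\bigr)^\ast = \tfrac12 C_1^\ast + \tfrac12 C_2^\ast$ (the additive correction $-1 + u + v$ is unchanged by the averaging). Both facts are immediate from the formula for $C^\ast$. Consequently $\tilde C^\ast = \tfrac12 C^\ast + \tfrac12 C^{\ast\ast} = \tfrac12 C^\ast + \tfrac12 C = \tilde C$, and $\tilde C = \tilde C^\ast$ is exactly the assertion that $\tilde C$ is radially symmetric. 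Alternatively, one can check $\tilde C(u,v) - \tilde C(1-u,1-v) + 1 - u - v = 0$ by direct substitution, using $C^\ast(1-u,1-v) = C(u,v) + 1 - u - v$; the terms cancel after a short computation.

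There is no genuine obstacle here; the only point requiring care is bookkeeping with the constant/linear correction term in the definition of the survival copula, and making the involution identity $C^{\ast\ast} = C$ explicit before invoking it. The same template applies verbatim to Proposition~\ref{prop:jsymc}, with the relevant symmetry-generating transformation in place of $C \mapsto C^\ast$.
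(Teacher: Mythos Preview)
Your proposal is correct and follows essentially the same approach as the paper: establish that $\tilde{C}$ is a copula as a convex combination of the copulas $C$ and $C^\ast$, then verify radial symmetry. The paper carries out the second step by the direct substitution you list as your alternative route, expanding $\tilde{C}(u,v)-\tilde{C}(1-u,1-v)$ and checking it equals $u+v-1$; your primary argument via the involution $C^{\ast\ast}=C$ and affinity of $C\mapsto C^\ast$ is a slightly more conceptual repackaging of the same computation.
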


\begin{property} \label{prop:jsymc}
	Given any copula $C$, the copula $\tilde{C}$ defined, for all $(u, v) \in [0, 1]^2$, by 
	\begin{equation*}
		\tilde{C}(u, v) = \frac{1}{4} \Big (C(u, v) + u - C(u, 1- v) + v - C(1-u, v) + C^\ast(u, v) \Big )
	\end{equation*}
	is a jointly symmetric copula. 
\end{property}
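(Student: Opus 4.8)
The plan is to realize $\tilde C$ as the distribution function of an explicit random vector on $[0,1]^2$ obtained by randomly reflecting each coordinate of a $C$-distributed pair; the copula property and joint symmetry then both follow from the symmetry that is built into the construction, in the same spirit in which Propositions~\ref{prop:symc} and~\ref{prop:rsymc} can be read off from analogous mixtures.

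First I would set up the construction. Let $(U,V) \sim C$, let $B_1, B_2$ be independent $\mathrm{Bernoulli}(1/2)$ variables independent of $(U,V)$, and put $W_1 = B_1 U + (1-B_1)(1-U)$ and $W_2 = B_2 V + (1-B_2)(1-V)$. Since $U$ and $1-U$ are each $\mathrm{Uniform}[0,1]$, $W_1$ is $\mathrm{Uniform}[0,1]$ conditionally on $B_1$, hence unconditionally; likewise $W_2$. Therefore the joint distribution function $H$ of $(W_1, W_2)$ has uniform margins and is automatically $2$-increasing, i.e.\ $H$ is a genuine bivariate copula. Next I would identify $H$ with the stated formula by conditioning on $(B_1, B_2)$: the four conditional distribution functions of $(W_1, W_2)$ are $C(u,v)$, $u - C(u,1-v)$, $v - C(1-u,v)$, and $1 - (1-u) - (1-v) + C(1-u,1-v) = C^\ast(u,v)$, where continuity of the marginals of $C$ is used to pass between $\le$ and $<$ when turning upper-orthant probabilities into values of $C$. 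Averaging these with weights $1/4$ reproduces exactly $\tilde C$, so $\tilde C = H$ is a copula.

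For joint symmetry I would exploit that $1 - W_1 = (1-B_1)U + B_1(1-U)$ has the same functional form as $W_1$ with $B_1$ replaced by $1-B_1$; since $(1-B_1, B_2) \overset{d}{=} (B_1, B_2)$ and both are independent of $(U,V)$, we get $(1-W_1, W_2) \overset{d}{=} (W_1, W_2)$, and symmetrically $(W_1, 1-W_2) \overset{d}{=} (W_1, W_2)$. Translating the first identity into distribution functions, $\tilde C(u,v) = \mathds{P}(1-W_1 \le u,\, W_2 \le v) = v - \tilde C(1-u, v)$ (again using continuity), which is the second equation of Definition~\ref{def:jsym}; the other identity gives the first, $\tilde C(u,v) + \tilde C(u, 1-v) = u$. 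As a self-contained cross-check one can instead substitute $C^\ast(u,v) = C(1-u,1-v) - 1 + u + v$ to obtain $4\tilde C(u,v) = C(u,v) - C(u,1-v) - C(1-u,v) + C(1-u,1-v) + 2u + 2v - 1$ and verify the two functional equations by direct cancellation.

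I do not expect a serious obstacle. The one point to be careful about is not to conflate ``$\tilde C$ satisfies the two joint-symmetry equations'' with ``$\tilde C$ is a copula'': the $2$-increasing property has to be established on its own, and the probabilistic representation above is precisely what makes that step immediate (a purely algebraic argument would additionally have to verify a rectangle inequality). The only recurring technicality is the use of continuity of the marginal laws of $C$ to move freely between $C$ and its left limits when rewriting probabilities of upper sets, which is where the standing assumption behind working with copulas of continuous random vectors enters.
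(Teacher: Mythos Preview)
Your argument is correct and complete. It differs from the paper's proof in execution, though both rest on the same underlying observation that the four summands are the distribution functions of the four reflections $(U,V)$, $(U,1-V)$, $(1-U,V)$, $(1-U,1-V)$.

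The paper proceeds in two separate stages: it first shows that $u - C(u,1-v)$ and $v - C(1-u,v)$ are copulas by identifying them as the laws of $(U,1-V)$ and $(1-U,V)$ via Sklar's theorem, then invokes Lemma~\ref{lem:amean} (averages of copulas are copulas) to conclude $\tilde C$ is a copula; the joint-symmetry identities $\tilde C(u,v)+\tilde C(u,1-v)=u$ and $\tilde C(u,v)+\tilde C(1-u,v)=v$ are then checked by a direct algebraic expansion. Your construction packages all four reflections into a single random vector $(W_1,W_2)$ via the independent Bernoulli flips, so that $\tilde C$ is realized in one stroke as an honest distribution function with uniform margins, and joint symmetry is read off from the distributional invariance $(1-W_1,W_2)\overset{d}{=}(W_1,W_2)\overset{d}{=}(W_1,1-W_2)$ without any algebra. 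Your route is more conceptual and avoids the explicit rectangle/average bookkeeping; the paper's route is more modular and makes each ingredient (each of the four copulas, the averaging lemma) reusable on its own. The algebraic cross-check you include at the end is essentially the paper's verification of joint symmetry.
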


\begin{table}[htb!]
\centering
    \begin{tabular}{c|c|c}
    \hline
    \\[-0.8em]
         \textbf{Type} & \textbf{Access} & \textbf{Simulated Distribution} \\
         \\[-0.8em]
         \hline
         \\[-1em]
         
         \multirow{4}{*}{S} & $(U_1, U_2)$ & $\widehat{C}^{S} (u, v) = \frac{1}{2} \Big ( \widehat{C}_n (u, v) + \widehat{C}_n (v, u)\Big )$\\[1.5ex]
        \cmidrule{2-3}
        
         & $(X_1, X_2)$ & $\widehat{C}^{S} (u, v) = \frac{1}{2} \Big ( \widehat{D}_n (u, v) + \widehat{D}_n (v, u)\Big )$\\[1.5ex]
         \hline
         \\[-1em]
         
         \multirow{4}{*}{R} & $(U_1, U_2)$ &  $\widehat{C}^{R} (u, v) = \frac{1}{2} \Big ( \widehat{C}_n (u, v) + \widehat{C}^\ast_n (u, v)\Big )$ \\[1.5ex]
        \cmidrule{2-3}
        
        & $(X_1, X_2)$ & $\widehat{C}^{R} (u, v) = \frac{1}{2} \Big ( \widehat{D}_n (u, v) + \widehat{D}^\ast_n (u, v)\Big )$\\[1.5ex]
         \hline
         \\[-1em]
         
        \multirow{4}{*}{J} & $(U_1, U_2)$ & $ \widehat{C}^{{J}}(u, v) = \hspace{1mm}\tfrac{1}{4} \Big ( \widehat{C}_n(u, v) + u - \widehat{C}_n(u, 1 - v) 
         + v - \widehat{C}_n(1-u, v) + \widehat{C}^\ast_n(u, v ) \Big)$ \\[1.5ex]
         \cmidrule{2-3}

         & $(X_1, X_2)$ & $ \widehat{C}^{{J}}(u, v) = \hspace{1mm}\tfrac{1}{4} \Big ( \widehat{D}_n(u, v) + u - \widehat{D}_n(u, 1 - v) 
         + v - \widehat{D}_n(1-u, v) + \widehat{D}^\ast_n(u, v ) \Big) $ \\[1.5ex]
         \hline
    \end{tabular}
    \caption{Definitions of estimated copulas constructed to test different types of symmetry. $\widehat{C}^\ast_n$ and $\widehat{D}^\ast_n$ denote the estimator for the survival copulas associated with the underlying copula.}
    \label{tb:cpls}
\end{table}



\section{Simulation Study}\label{Simulation}

We performed simulations to evaluate the effectiveness of our testing procedures for the three copula structures described in Section~\ref{cop_sym}. In each simulation, we estimated 1200 test functions, and the tests were conducted at a nominal level of $5\%$. We utilized $N_b=1000$ bootstrap samples for our analysis. To assess the performance of our approach for testing reflection symmetry, we compared it to the tests proposed by \cite{2012.GNQ.AISM} and \cite{2013.LG.JASA}.  The results, including the sizes and powers of the test for reflection symmetry, are summarized in Table~\ref{tab:symmetry}. To evaluate the power of our tests, we introduce asymmetry to the copula models using Khoudraji's device \citep{Khoudraji1995}. Specifically, we used an asymmetric version of a copula $C(u,v)$, defined for $(u,v) \in [0,1]^2$, given by:
\begin{equation*}
K_{\delta}(u,v) = u^{\delta}C(u^{1-\delta},v),
\end{equation*}
\noindent where $\delta \in (0,1)$. Previous studies \citep{2012.GNQ.AISM} have shown that Khoudraji's device introduces minimal asymmetry when Kendall's $\tau \leq 0.5$. The maximum level of asymmetry is typically observed around $\delta=0.5$. The R codes for implementing the visualization and hypotheses testing procedures are available at \url{https://github.com/cfjimenezv07/Visualization-and-Assessment-of-Copula-Symmetry}.

\begin{table}[!htb]
    \centering
    \addtolength{\leftskip} {-2.2cm}
    \addtolength{\rightskip}{-2cm}
     \resizebox{\textwidth}{!}{
    \begin{tabular}{ccccccc|cccc|cccc}
    \toprule
    \toprule
         \multicolumn{3}{c}{}&
         \multicolumn{4}{c}{Clayton} &
         \multicolumn{4}{c}{Gaussian} &
         \multicolumn{4}{c}{Gumbel}\\
         
         \multicolumn{3}{c}{} &
         \multicolumn{4}{c}{$n$} &
         \multicolumn{4}{c}{$n$} &
         \multicolumn{4}{c}{$n$}\\ \cmidrule{4-7} \cmidrule{8-11} \cmidrule{12-15}
         
         $\delta$ & $\tau$ &  & 100 & \hl 250 & \hl 500 & \hs 1000 & 100 & \hl 250 & \hl 500 & \hs 1000 & 100 & \hl 250 & \hl 500 & \hs 1000 \\
         
         \midrule
         \multirow{3}{*}{0}
         & 1/4 & \multirow{4}{*}{\rotatebox{90}{\rotatebox{-90}{E} \rotatebox{-90}{Z} \rotatebox{-90}{\hspace{0.4mm}I} \rotatebox{-90}{S} \hspace{2mm}} }
         & 0.044 & 0.048 & 0.059 & 0.062 & 0.053 & 0.052 & 0.046 & 0.066 & 0.050 & 0.063 & 0.059 & 0.063 \\
         
         & 1/2 &
         & 0.021 & 0.043 & 0.051 & 0.053 & 0.031 & 0.039 & 0.048 & 0.064 & 0.038 &
         0.044 & 0.055 & 0.053\\
         
         & 3/4 &
         & 0.007 & 0.002 & 0.007 & 0.007 & 0.001 & 0.002 & 0.003 & 0.004 & 0.007 &
         0.008 & 0.009 & 0.019\\
         \cmidrule{1-2} \cmidrule{4-15}
         
         0 & $0^\ast$ &  
         & 0.045 & 0.060 & 0.059 & 0.060 & \hl \xdash & \hl \xdash & \hl \xdash & \hl \xdash & \hl \xdash & \hl \xdash & \hl \xdash & \hl \xdash \\
         
         \midrule
         
         \multirow{3}{*}{1/4} & 0.5 &  \multirow{9}{*}{\rotatebox{90}{
         \rotatebox{-90}{\hspace{0.4mm}R}
         \rotatebox{-90}{\hspace{0.4mm}E} 
         \rotatebox{-90}{W} \rotatebox{-90}{\hspace{0.4mm}O} \rotatebox{-90}{\hspace{0.4mm}P} \hspace{2mm}} }
         & 0.121 & 0.297 & 0.547 & 0.796 & 0.089 & 0.267 & 0.552 & 0.850 & 0.119 &
         0.312 & 0.609 & 0.903\\
         
         & 0.7 & &
         0.374 & 0.814 & 0.981 & 0.999 & 0.336 & 0.860 & 0.995 & 1.000 & 0.365 & 
         0.886 & 0.994 & 1.000\\
         
         & 0.9 & 
         & 0.675 & 0.977 & 0.996 & 1.000 & 0.740 & 0.987 & 0.998 & 1.000 & 0.728 & 
         0.990 & 1.000 & 1.000\\
         \cmidrule{1-2} \cmidrule{4-15}
         
         \multirow{3}{*}{1/2} & 0.5 &
         & 0.149 & 0.358 & 0.630 & 0.872 & 0.187 & 0.511 & 0.826 & 0.990 & 0.264 &
         0.690 & 0.929 & 0.998\\
         
         & 0.7 & 
         & 0.483 & 0.926 & 0.998 & 1.000 & 0.664 & 0.986 & 1.000 & 1.000 & 0.736 &
         0.992 & 1.000 & 1.000\\
         
         & 0.9 & 
         & 0.908 & 1.000 & 1.000 & 1.000 & 0.951 & 1.000 & 1.000 & 1.000 & 0.940 & 1.000 & 1.000 & 1.000\\
         \cmidrule{1-2} \cmidrule{4-15}
         
         \multirow{3}{*}{3/4} & 0.5 &
         & 0.087 & 0.210 & 0.334 & 0.557 & 0.178 & 0.450 & 0.743 & 0.959 & 0.301 & 0.656 & 0.930 & 0.995\\
         
         & 0.7 & &
         0.254 & 0.629 & 0.897 & 0.996 & 0.500 & 0.929 & 0.999 & 1.000 & 0.605 &
         0.957 & 0.998 & 1.000\\
         
         & 0.9 & & 
         0.662 & 0.984 & 1.000 & 1.000 & 0.761 & 0.999 & 1.000 & 1.000 & 0.768 &
         0.985 & 1.000 & 1.000\\
         \midrule
         \midrule
    \end{tabular}}
    \caption{Empirical sizes and powers of the test of symmetry in the setting as \cite{2012.GNQ.AISM} and \cite{2013.LG.JASA}. The sizes for the independent copula are $\delta=\tau=0$.}
    \label{tab:symmetry}
\end{table}

Our results indicate that under small and moderate $\tau$, the sizes converge to the nominal value as the sample size increases, but with larger $\tau$, the sizes are somewhat below the nominal level. The powers of the tests increase as the sample size increases. When comparing our results to Table 3 in \cite{2012.GNQ.AISM}, their approach generally achieves better powers for a small sample size of $n=250$. In contrast, our approach demonstrates significantly improved powers as the sample size increases. This is expected as our rank-based test relies on the asymptotic distribution of the test functions and the empirical copula process.

When comparing our simulation results to Table 1 in \cite{2013.LG.JASA}, we find that our results align with most of the reported powers and sizes. However, our approach achieves considerably higher powers even at smaller sample sizes, particularly for intermediate values of $\tau$ where the maximum asymmetry is expected.

For the tests of radial and joint symmetry, we selected five commonly used copulas. The sizes and powers of both testing procedures are presented in Table~\ref{tab:radsymmetry}. To evaluate the performance of our approach in testing radial symmetry, we compared it to the methods proposed by \cite{2013.LG.JASA} and \cite{2014.GN.SP}. Additionally, we compared our approach for testing joint symmetry to the methods reported by \cite{2013.LG.JASA}.

\begin{table}[!htb]
    \centering
    \resizebox{\textwidth}{!}{
    \begin{tabular}{ccccccc|ccccc}
    \toprule
    \toprule
         \multicolumn{2}{c}{}&
         \multicolumn{5}{c}{Radial} &
         \multicolumn{5}{c}{Joint}\\
         \cmidrule{3-7}
         \cmidrule{8-12}
         
          & $\tau$ & $n=$ & $100$ &
         $250$ & $500$ & $1000$ & $n =$ & $100$ & $250$ &  $500$ & $1000$ \\
         \midrule
         
         $\Pi$ & $0$ & 
         \multirow{7}{*}{\rotatebox{90}{\rotatebox{-90}{E} \rotatebox{-90}{Z} \rotatebox{-90}{\hspace{0.4mm}I} \rotatebox{-90}{S} \hspace{2mm}} }
         & 0.060 & 0.073 & 0.065 & 0.060& \scriptsize SIZE & 0.030 & 0.029 & 0.024 & 0.048 \\
         \cmidrule{1-2}
         \cmidrule{4-7}
         \cmidrule{8-12}
         
         \multirow{3}{*}{Frank}
         & 1/4 &
         & 0.054 & 0.060 & 0.059  & 0.054 & \multirow{12}{*}{\rotatebox{90}{
         \rotatebox{-90}{\hspace{0.4mm}R}
         \rotatebox{-90}{\hspace{0.4mm}E} 
         \rotatebox{-90}{W} \rotatebox{-90}{\hspace{0.4mm}O} \rotatebox{-90}{\hspace{0.4mm}P} \hspace{2mm}}} & 0.685 & 0.963 & 0.997 & 1.000\\
         
         & 1/2 &
         & 0.050 & 0.064 & 0.058 & 0.059 & & 0.968 & 1.000 & 1.000 & 1.000 \\
         
         & 3/4 &
         & 0.016 & 0.041 & 0.037 & 0.046 & & 0.975 & 0.999 & 1.000 & 1.000 \\
         \cmidrule{1-2}
         \cmidrule{4-7}
         \cmidrule{9-12}
         
        \multirow{3}{*}{Gaussian}
         & 1/4 &
         & 0.062 & 0.060 & 0.047 & 0.059 &  & 0.725 & 0.984 & 1.000 & 1.000 \\
         
         & 1/2 &
         & 0.041 & 0.041 & 0.054 & 0.054 & & 0.976 & 0.999 & 1.000 & 1.000 \\
         
         & 3/4 &
         & 0.009 & 0.022 & 0.014 & 0.031 & & 0.973 & 1.000 & 1.000 & 1.000 \\
         \cmidrule{1-2}
         \cmidrule{3-7}
         \cmidrule{9-12}
         
        \multirow{3}{*}{Clayton}
         & 1/4 & \multirow{6}{*}{\rotatebox{90}{
         \rotatebox{-90}{\hspace{0.4mm}R}
         \rotatebox{-90}{\hspace{0.4mm}E} 
         \rotatebox{-90}{W} \rotatebox{-90}{\hspace{0.4mm}O} \rotatebox{-90}{\hspace{0.4mm}P} \hspace{2mm}}} 
         & 0.228 & 0.491 & 0.830 & 0.985 &  & 0.706 & 0.957 & 0.999 & 1.000 \\
         
         & 1/2 &
         & 0.516 & 0.833 & 0.985 & 1.000 & & 0.966 & 0.999 & 1.000 & 1.000 \\
         
         & 3/4 &
         & 0.291 & 0.512 & 0.912 & 0.996 & & 0.976 & 1.000 & 1.000 & 1.000 \\
         \cmidrule{1-2}
         \cmidrule{4-7}
         \cmidrule{9-12}
         
         \multirow{3}{*}{Gumbel}
         & 1/4 &  
         & 0.110 & 0.333 & 0.510 & 0.695 &  & 0.696 & 0.974 & 0.998 & 1.000 \\
         
         & 1/2 &
         & 0.166 & 0.486 & 0.723 & 0.909 & & 0.974 & 1.000 & 1.000 & 1.000 \\
         
         & 3/4 &
         & 0.042 & 0.322 & 0.580 & 0.815 & & 0.978 & 0.999 & 1.000 & 1.000 \\
         
         \midrule
         \midrule
    \end{tabular}}
\caption{Sizes and powers of the test of radial and joint symmetry in the setting as~\cite{2013.LG.JASA} and~\cite{2014.GN.SP}. One bootstrap sample is used to estimate the distribution of the test statistic $W$ under the null.}
    \label{tab:radsymmetry}
\end{table}

The sizes of both tests closely align with the nominal level of $5\%$ for values of $\tau$ equal to $1/4$ and $1/2$. However, for larger values of $\tau$, the sizes are achieved at larger sample sizes. Specifically, in the case of the radial symmetry test, our results are consistent with those presented in Table 2 of \cite{2013.LG.JASA}, and once again our testing approach achieves the nominal levels at smaller sample sizes.

In comparison to the results presented in Table 1 of \cite{2014.GN.SP} for the Frank and Gaussian copula models they considered, our approach achieves sizes that are closer to the nominal level. Although their performance improves for larger sample sizes, our approach outperforms them even in scenarios with larger sample sizes.

Regarding the joint symmetry copula structure, our achieved powers are significantly higher compared to the powers obtained for the radial symmetry property, similar results are presented in Table 2 of \cite{2013.LG.JASA}.


\section{Data Applications}\label{data_app}
To illustrate the procedures herein, we apply our visualization and hypothesis testing methodology to two real-world data sets. The Nutritional Habits Survey Data is in Section~\ref{app:NHSD}, and the wind speed dataset in Saudi Arabia is in Section~\ref{app:WD}.

\subsection{Nutritional Habits Survey Data}\label{app:NHSD}

The dataset used here comes from a survey conducted by the U.S. Department of Agriculture in 1985. The survey aimed to investigate the dietary habits of 737 women aged between 25 and 50 years. Specifically, the survey collected daily intake measurements of five variables: calcium (mg), iron (mg), protein (g), vitamin A (mg), and vitamin C (mg).

In previous analyses, \cite{2012.GNQ.AISM} employed a Cram\'er-von Mises statistic to evaluate the bivariate reflection symmetry of the pairwise copulas. Similarly, \cite{2013.LG.JASA} utilized a nonparametric method based on the asymptotic distribution of the empirical copula process to assess reflection symmetry for this dataset. In our study, we conducted a test using our rank-based method and obtained corresponding p-values, which are presented in the top-right corner of each subplot in Figure~\ref{fig:nutrition}.

The pattern of p-values obtained in our study exhibits similarities to the findings reported by \cite{2012.GNQ.AISM} and \cite{2013.LG.JASA}, leading to similar conclusions for some pairs. However, there are notable differences. Unlike \cite{2012.GNQ.AISM}, our test does not reject the reflection symmetry copula at a $5\%$ significance level for the pairs (iron, vitamin A), (iron, vitamin C), and (protein, vitamin A). Conversely, we do reject the reflection symmetry copula for the pair (protein, vitamin C).

When comparing our results to those of \cite{2013.LG.JASA}, we find much closer agreement. The only difference in conclusions arises for the two pairs: (protein, vitamin C), where we reject the reflection symmetry copula, and (iron, vitamin A), where we do not reject it. Notably, the pair (iron, vitamin A) differs from the conclusions of both \cite{2012.GNQ.AISM} and \cite{2013.LG.JASA}. However, our proposed visualization method reveals that the majority of the test functions exhibit high density around the zero level for this pair.

\begin{figure}[!b]
    \centering
    \includegraphics[width=\textwidth]{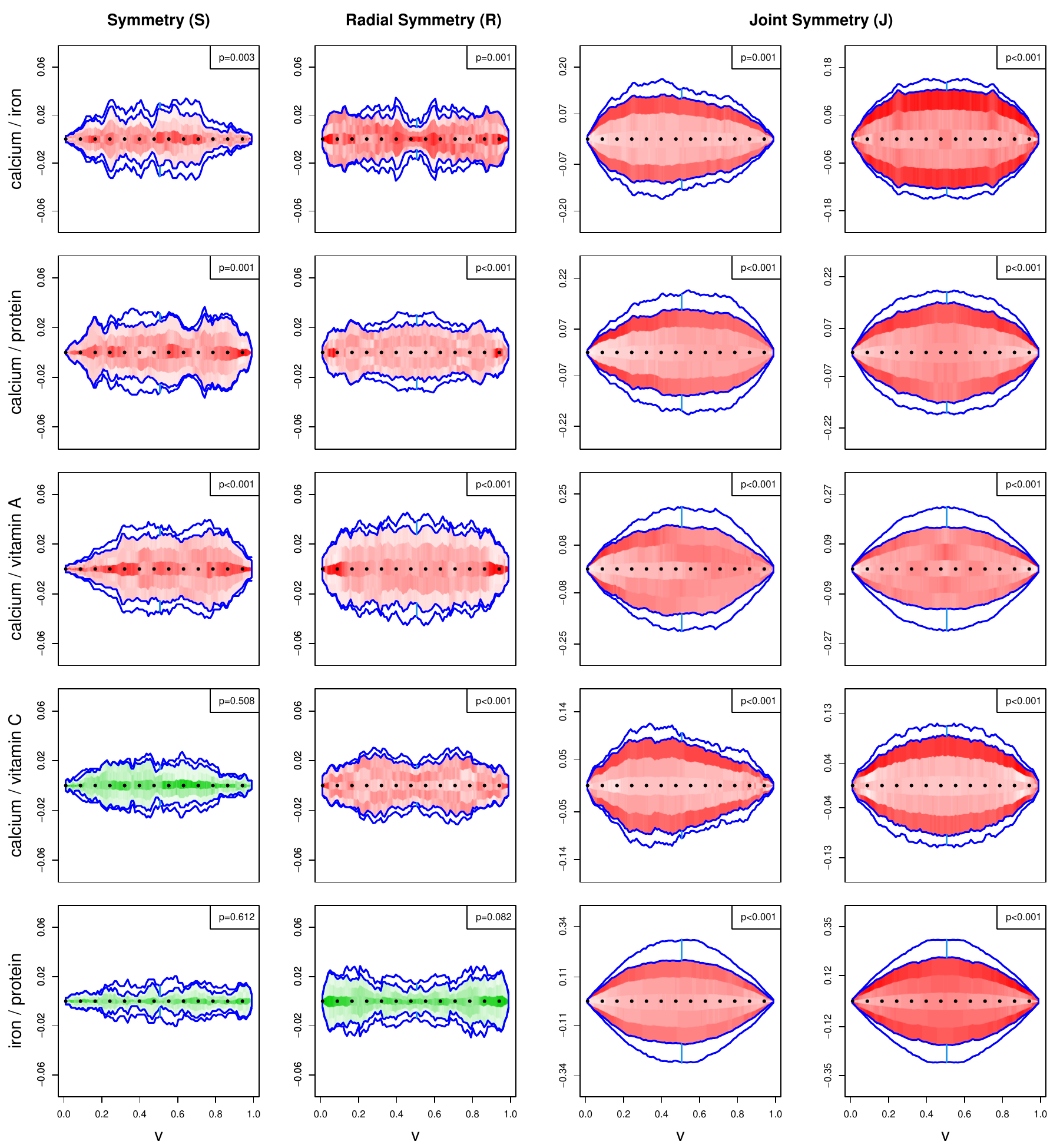}
\end{figure}
\begin{figure}[!ht]
    \centering
    \includegraphics[width=\textwidth]{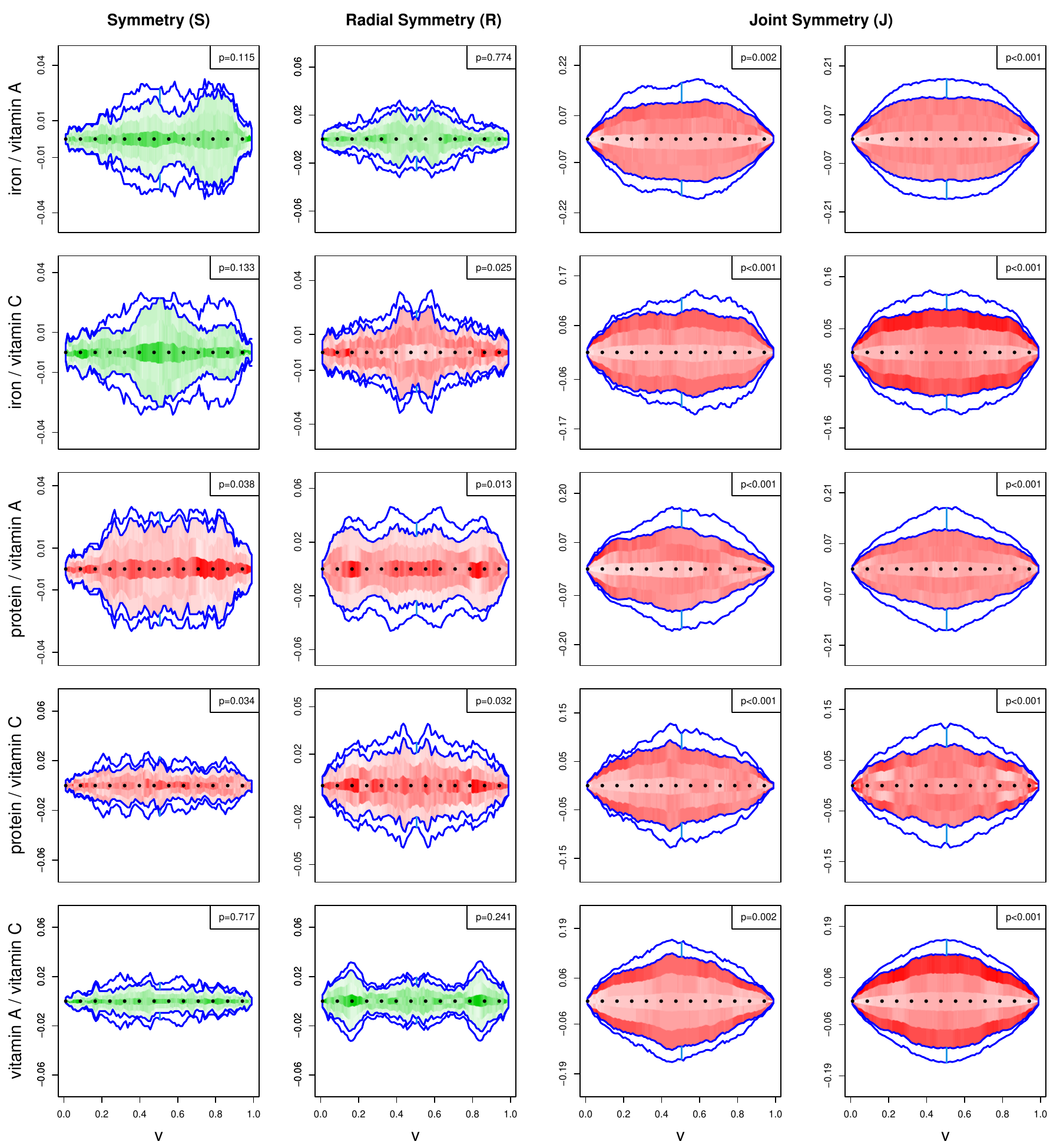}
    \caption{Visualizations and testing results for copula symmetries over the $5$ random variables in the nutritional data set.}
    \label{fig:nutrition}
\end{figure}

\subsection{Wind data}\label{app:WD}

In this study, we analyze a trivariate wind speed dataset obtained from \cite{Yip}. The dataset comprises bi-weekly mid-day wind speed measurements recorded over the period of 2009-2014 at three specific locations near Dumat Al Jandal, the site of the first wind farm currently under construction in Saudi Arabia. The dataset consists of $n = 156$ trivariate wind speed vectors, each containing measurements from three positions.

Understanding the dependence structure within these trivariate wind speed vectors is crucial for assessing how wind patterns impact the electricity generation of the nearby wind farm. By analyzing this dependence structure, we can gain insights into the interactions between wind speeds at different locations and optimize the operation and output of the wind farm.

One specific aspect of interest is evaluating whether a potential copula model, which quantifies the dependence among bivariate positions, exhibits a specific symmetry structure. This assessment allows us to examine the symmetry properties of a potential copula model and determine if it aligns with the desired symmetry structure. Understanding the symmetry properties can be essential for accurately modeling and predicting wind speed behavior.

Figure~\ref{fig:WD} presents the visualization and hypothesis testing results for the dataset under consideration, focusing on the three copula structures analyzed in the paper. The results show that joint symmetry is rejected for all pairs of locations. However, reflection symmetry and radial symmetry are not rejected by the bivariate dataset formed by positions 2 and 3. This is indicated by the functional boxplot, which demonstrates functions distributed widely around zero and the corresponding large p-values. In summary, our approach suggests using asymmetric copula models to properly capture the dependence in these wind speed data.

\begin{figure}[!htb]
    \centering
\includegraphics[width=\textwidth]{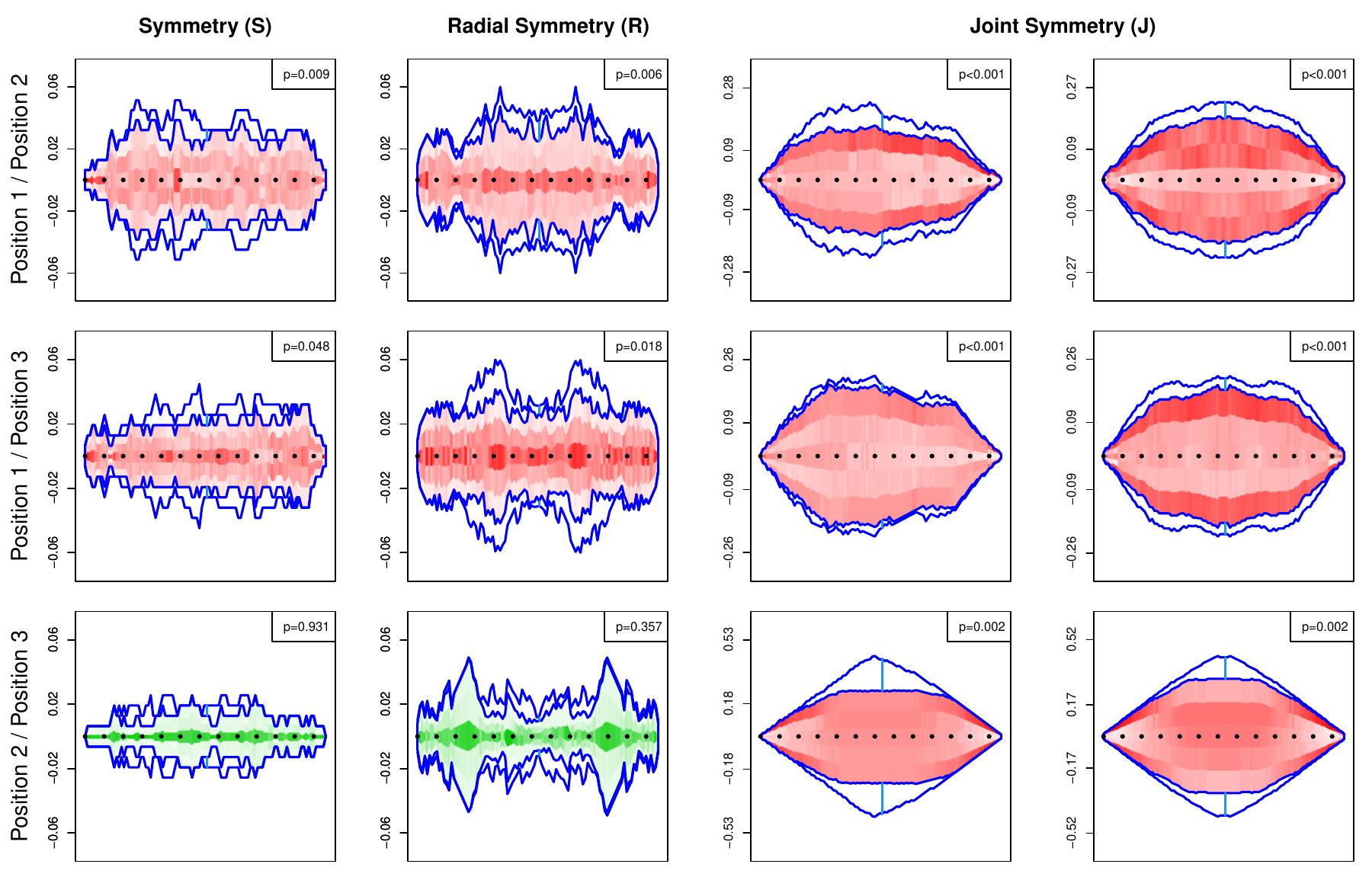}
      \caption{Visualizations and testing results for copula symmetries over the $3$ locations for the wind data in Saudi Arabia.}
    \label{fig:WD}
\end{figure}


\section{Discussion}\label{Discuss}
In this paper, we have presented a comprehensive framework for visualizing and testing common assumptions regarding copula structures, such as reflection symmetry, radial symmetry, and joint symmetry. Our approach utilizes functional data analysis techniques to construct test functions based on bivariate copulas at specific discrete points. These test functions effectively summarize the copula structures and provide valuable insights into their adherence to specific structures.

To visually represent the copula structures, we employ functional boxplots, which depict the functional median and variability of the test functions. These visualizations allow us to assess the departure from zero and gain insights into the degree to which the copula structures conform to the desired assumptions.

Additionally, we have introduced a nonparametric testing procedure to evaluate the significance of deviations from symmetry. Through extensive simulation studies involving various copula models, we have demonstrated the reliability and power of our method, particularly for moderate to large sample sizes. The numerical experiments and data analyses conducted in this study have consistently shown robust testing results across different datasets, and the visualization technique has proven useful in extracting preliminary information directly from the data. It is worth mentioning that our functional data approach can be extended to test copula properties beyond symmetry as well.

It is important to note that our testing method relies on the asymptotic distribution of the estimators of the test functions, which are derived from empirical copula processes. As a result, the small sample properties of our test may not be as optimal as certain existing testing methods. The required sample size for our testing approach can vary depending on the specific copula structure under examination, but in general, larger sample sizes tend to enhance the size and power of the test. Thus, increasing the sample size is recommended to improve the overall performance of the test in terms of accuracy and sensitivity.

Finally, our visualization and testing techniques were applied to two real-world datasets: a nutritional habits survey with five variables and wind speed data from three locations in Saudi Arabia. These applications provided valuable insights into the underlying structures and patterns within the datasets, demonstrating the effectiveness of our approach in gaining a better understanding of the data.


\baselineskip=15pt
\bibliographystyle{chicago}
\bibliography{Paper3}

\appendix

\section{Appendix}\label{appendix}
\subsection{\textit{Proof of Proposition~\ref{prop:a1}}}\label{A:prop1}
 The proof of the proposition is essentially the same as that of Proposition 2 in~\cite{2012.GNQ.AISM}. They focus on the asymptotic results when the underlying copula $C$ is indeed reflection symmetric. We firstly consider the case when $\widehat{f}^{^{\hspace{0.5mm}S}}_v (t) = \widehat{C}_n(t, v) - \widehat{C}_n (v, t)$. The map
\begin{equation*}  
    \mathcal{T}^S_v: \ell([0, 1]^2) \rightarrow \ell([0, 1]) : f \mapsto \big (t \mapsto f(t, v) - f(v, t)\big)\\
\end{equation*}
is a continuous function. Thus, the Continuous Mapping Theorem (\cite{1996.VW.SVNY}, Theorem 1.3.6) and result~\eqref{eq:ec1} imply that $\sqrt{n} \left \{ \widehat{f}^{^{\hspace{0.5mm}S}}_v (t) - f^{^{S}}_v (t) \right \} \cid \mathbb{E}^{^S}_v (t) := \mathbb{C} (t, v) - \mathbb{C}(v, t)$ for all $t \in [0, 1]$, where $\mathbb{E}^{^S}_v$ is a centered Gaussian random field with covariance function given at each $s, t \in [0, 1]$ by 
\begin{align*}
    \Cov{\mathbb{E}^{^S}_v (t)}{\mathbb{E}^{^S}_v (s)} = & \hspace{2mm} \Cov{\mathbb{C}(t, v) - \mathbb{C}(v, t)}{\mathbb{C}(s, v) - \mathbb{C}(v, s)}\\
    = & \hspace{2mm} C(t \land s, v) - C(t, v) C(s, v) + C(v, t \land s) - C(v, t)C(v, s)\\
    & - C(t \land v, s \land v) + C(t, v) C(v, s) - C(s \land v, t \land v) + C(v, t)C(s, v).  
\end{align*}
Similarly, if $\widehat{f}^{^{\hspace{0.5mm}S}}_v (t) = \widehat{D}_n (t, v) - \widehat{D}_n (v, t)$, we have 
\begin{align*}
    \sqrt{n} \left \{ \widehat{f}^{^{\hspace{0.5mm}S}}_v (t) - f^{^{S}}_v (t) \right \} \cid \widehat{\mathbb{E}}^{^S}_v (t) := & \hspace{2mm} \mathbb{C}(t, v) - \dot{C}_1 (t, v) \mathbb{C}(t, 1) - \dot{C}_2 (t, v) \mathbb{C}(1, v) \\
    & - \mathbb{C}(v, t) + \dot{C}_1 (v, t) \mathbb{C}(v, 1) + \dot{C}_2 (v, t) \mathbb{C}(1, t)
\end{align*}
for all $t \in [0, 1]$, where $\widehat{\mathbb{E}}^{^S}_v$ is a centered Gaussian random field. One can easily derive its covariance function, but the expression is omitted here in view of its intricate closed form. 
\\
\\
\subsection{\textit{Proof of Proposition~\ref{prop:a2}}}\label{A:prop2}

We first consider the case when $\widehat{f}^{^{\hspace{0.5mm}R}}_v (t) = \widehat{C}_n(t, v) - \widehat{C}_n (1- t, 1 -v) + 1 - t - v$. Notice that 
\begin{align*}
    \sqrt{n} \left \{\widehat{f}^{^{\hspace{0.5mm}R}}_v (t) - f^{^R}_v (t) \right \} &= \sqrt{n} \left \{ \widehat{C}_n(t, v) - \widehat{C}_n (1- t, 1- v) - \big ( C(t,v) - C(1 -t, 1 -v) \big )\right \} \\ 
    & = \sqrt{n} \left \{\widehat{C}_n (t, v) - C(t, v)\right\} - \sqrt{n} \left \{\widehat{C}_n (1 - t, 1- v) - C(1- t, 1 - v)\right\} \\ 
    & = \mathbb{C}_n (t, v) - \mathbb{C}_n (1 - t, 1 - v).
\end{align*}
Since the map
\begin{equation*}  
    \mathcal{T}^R_v: \ell([0, 1]^2) \rightarrow \ell([0, 1]) : f \mapsto \big (t \mapsto f(t, v) - f(1 - t, 1- v)\big)\\
\end{equation*}
is a continuous functional, the Continuous Mapping Theorem and result~\eqref{eq:ec1} imply that $\sqrt{n} \left \{ \widehat{f}^{^{\hspace{0.5mm}R}}_v (t) - f^{^{R}}_v (t) \right \}$ $\cid \mathbb{E}^{^R}_v (t) := \mathbb{C} (t, v) - \mathbb{C}(1 - t, 1- v)$ for all $t \in [0, 1]$, where $\mathbb{E}^{^R}_v$ is a centered Gaussian random field with covariance function given at each $s, t \in [0, 1]$ by 
\begin{align*}
    \Cov{&\mathbb{E}^{^R}_v (t)}{\mathbb{E}^{^R}_v (s)} \\
    = & \hspace{2mm} \Cov{\mathbb{C}(t, v) - \mathbb{C}(1 -t, 1 - v)}{\mathbb{C}(s, v) - \mathbb{C}(1 -s, 1 -v)}\\
    = & \hspace{2mm} C(t \land s, v) - C(t, v) C(s, v) + C(1 - t \lor s, 1 - v) - C(1 -t , 1 - v)C(1 -s, 1 - v)\\
    & - C(t \land (1 - s), v \land (1 - v)) + C(t, v) C(1 - s, 1 -v) \\
    & - C( (1 - t) \land s, v \land (1 - v)) + C(s, v)C(1 - t, 1 - v),
\end{align*}
where for all $a, b \in \R$, $a \lor b = \max \{a, b\}$. 
Similarly, if $\widehat{f}^{^{\hspace{0.5mm}R}}_v (t) = \widehat{D}_n (t, v) - \widehat{D}_n (1 - t, 1 - v) + 1 - t - v$, we have 
\begin{align*}
    \sqrt{n} &\left \{\widehat{f}^{^{\hspace{0.5mm}R}}_v (t) - f^{^{R}}_v (t) \right \} = \mathbb{D}_n (t, v) - \mathbb{D}_n (1 - t, 1 - v)\\
    & \cid \hspace{2mm} \widehat{\mathbb{E}}^{^R}_v (t) := \mathbb{C}(t, v) - \dot{C}_1 (t, v) \mathbb{C}(t, 1) - \dot{C}_2 (t, v) \mathbb{C}(1, v) \\
    & \hspace{2.5cm}- \mathbb{C}(1 - t, 1 -v) + \dot{C}_1 (1 - t, 1 - v) \mathbb{C}(1 -t, 1) + \dot{C}_2 (1 -t , 1 - v) \mathbb{C}(1, 1 - v)
\end{align*}
for all $t \in [0, 1]$, where $\widehat{\mathbb{E}}^{^R}_v$ is a centered Gaussian random field. One can derive its covariance function, but the expression is omitted here in view of its intricate closed form. 

\subsection{\textit{Proof of Proposition~\ref{prop:a3}}}\label{A:prop3}
 The proof of this Proposition is similar to those of Propositions~\ref{prop:a1} and~\ref{prop:a2} but with a different choice of continuous functionals, and we will only present the details of the asymptotic result regarding the test function $\widehat{f}^{^{\hspace{0.5mm}S, 1}}_v$. Firstly, consider the case when $\widehat{f}^{^{\hspace{0.5mm}S, 1}}_v (t) = \widehat{C}_n(t, v) + \widehat{C}_n (t, 1 -v) - t$. Notice that 
\begin{align*}
    \sqrt{n} \left \{\widehat{f}^{^{\hspace{0.5mm}S, 1}}_v (t) - f^{^{S,1}}_v (t) \right \} &= \sqrt{n} \left \{ \widehat{C}_n(t, v) + \widehat{C}_n (t, 1- v) - \big ( C(t,v) + C(t, 1 -v) \big )\right \} \\ 
    & = \sqrt{n} \left \{\widehat{C}_n (t, v) - C(t, v)\right\} + \sqrt{n} \left \{\widehat{C}_n (t, 1- v) - C(t, 1 - v)\right\} \\ 
    & = \mathbb{C}_n (t, v) + \mathbb{C}_n (t, 1 - v).
\end{align*}
Since the map
\begin{equation*}  
    \mathcal{T}^R_v: \ell([0, 1]^2) \rightarrow \ell([0, 1]) : f \mapsto \big (t \mapsto f(t, v) + f(t, 1- v)\big)\\
\end{equation*}
is a continuous functional, the continuous mapping theorem  and result~\eqref{eq:ec1} imply that 
\begin{equation*}
    \sqrt{n} \left \{ \widehat{f}^{^{\hspace{0.5mm}J,1}}_v (t) - f^{^{J,1}}_v (t) \right \}\cid \mathbb{E}^{^{J,1}}_v (t) := \mathbb{C} (t, v) + \mathbb{C}(t, 1- v)~\forall t \in [0, 1],
\end{equation*}
  where $\mathbb{E}^{^{J,1}}_v$ is a centered Gaussian random field with covariance function given at each $s, t \in [0, 1]$ by 
\begin{align*}
    \Cov{&\mathbb{E}^{^{J,1}}_v (t)}{\mathbb{E}^{^{J,1}}_v (s)} \\
    = & \hspace{2mm} \Cov{\mathbb{C}(t, v) + \mathbb{C}(t, 1 - v)}{\mathbb{C}(s, v) + \mathbb{C}(s, 1 -v)}\\
    = & \hspace{2mm} C(t \land s, v) + C(t, v) C(s, v) + C(t \land s, 1 - v) + C(t , 1 - v)C(s, 1 - v)\\
    & + C(t \land s, v \land (1 - v)) + C(t, v) C(s, 1 -v) + C( t \land s, v \land (1 - v)) + C(s, v)C(t, 1 - v),
\end{align*}
where for all $a, b \in \R$, $a \lor b = \max \{a, b\}$. 
Similarly, if $\widehat{f}^{^{\hspace{0.5mm}S,1}}_v (t) = \widehat{D}_n (t, v) + \widehat{D}_n ( t, 1 - v) - t$, we have 
\begin{align*}
    \sqrt{n} &\left \{\widehat{f}^{^{\hspace{0.5mm}R}}_v (t) - f^{^{R}}_v (t) \right \} = \mathbb{D}_n (t, v) + \mathbb{D}_n (t, 1 - v)\\
    & \cid \hspace{2mm} \widehat{\mathbb{E}}^{^{S,1}}_v (t) := \mathbb{C}(t, v) - \dot{C}_1 (t, v) \mathbb{C}(t, 1) - \dot{C}_2 (t, v) \mathbb{C}(1, v) \\
    & \hspace{2.5cm}- \mathbb{C}(1 - t, 1 -v) + \dot{C}_1 (1 - t, 1 - v) \mathbb{C}(1 -t, 1) + \dot{C}_2 (1 -t , 1 - v) \mathbb{C}(1, 1 - v)
\end{align*}
for all $t \in [0, 1]$, where $\widehat{\mathbb{E}}^{^{S,1}}_v$ is a centered Gaussian random field. One can derive its covariance function, but the expression is omitted here in view of its intricate closed form. 
\begin{lem} \label{lem:amean}
	The average of bivariate copulas is also a bivariate copula. 
\end{lem}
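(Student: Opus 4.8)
The plan is to verify that a convex combination (in particular, the arithmetic mean) of bivariate copulas satisfies the three defining properties of a copula: the boundary conditions, the uniform margins, and 2-increasingness. Let $C_1, \ldots, C_k$ be bivariate copulas and set $\bar C = \tfrac{1}{k} \sum_{j=1}^k C_j$; the same argument works verbatim for any weights $\lambda_j \ge 0$ with $\sum_j \lambda_j = 1$, but the uniform average is all we need here.

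First I would check the grounding (boundary) condition: for every $u, v \in [0,1]$, $\bar C(u, 0) = \tfrac{1}{k}\sum_j C_j(u, 0) = 0$ and similarly $\bar C(0, v) = 0$, since each $C_j$ vanishes on the lower edges. Next, the uniform-margin condition: $\bar C(u, 1) = \tfrac{1}{k}\sum_j C_j(u, 1) = \tfrac{1}{k}\sum_j u = u$, and likewise $\bar C(1, v) = v$. Both of these are immediate from linearity of the sum and the fact that each summand already has the property.

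The only substantive step is 2-increasingness (the ``rectangle inequality''): for all $u_1 \le u_2$ and $v_1 \le v_2$ in $[0,1]$,
\begin{equation*}
  \bar C(u_2, v_2) - \bar C(u_2, v_1) - \bar C(u_1, v_2) + \bar C(u_1, v_1) \ge 0.
\end{equation*}
By linearity this expression equals $\tfrac{1}{k}\sum_{j=1}^k \big[ C_j(u_2, v_2) - C_j(u_2, v_1) - C_j(u_1, v_2) + C_j(u_1, v_1)\big]$, and each bracketed term is the $C_j$-mass of the rectangle $(u_1, u_2] \times (v_1, v_2]$, hence nonnegative because $C_j$ is a copula. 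A nonnegative combination of nonnegative quantities is nonnegative, so $\bar C$ is 2-increasing. Together with the two conditions above, this shows $\bar C$ is a bivariate copula, which also makes $\bar C$ a genuine bivariate cumulative distribution function on $[0,1]^2$ with uniform margins.

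I do not anticipate a real obstacle: the whole proof rests on the linearity of summation interacting with properties that are each preserved under nonnegative linear combinations, and the key ``main step'' (2-increasingness) is only mildly more involved than the boundary checks because one must recognize the rectangle-increment form. A remark worth appending is that this closure under mixtures is exactly what legitimizes the estimators $\widehat{C}^{S}$, $\widehat{C}^{R}$, $\widehat{C}^{J}$ constructed in Table~\ref{tb:cpls} as bona fide copulas (up to the empirical-copula caveat already noted in the text), since each is an average or affine mixture of copula-type objects.
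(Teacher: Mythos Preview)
Your proof is correct and follows essentially the same approach as the paper: verify directly that the arithmetic mean inherits the grounding condition, the uniform margins, and 2-increasingness from each summand by linearity, with the remark that the argument extends to arbitrary convex combinations. The only cosmetic difference is notation; the substance is identical.
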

\begin{proof}
	Let $n \in \N$ be fixed, and let $C_1, \ldots, C_n$ be $n$ (not necessarily mutually distinct) bivariate copulas. Define 
	\begin{equation*}
	    \mathcal{C}(u, v) = \frac{1}{n} \sum_{i = 1}^n C_i (u, v), \hspace{5mm} \forall (u, v) \in [0, 1]. 
	\end{equation*}
If $u$ or $v$ is zero, say $u = 0$, we have that 
	\begin{equation*}
		\mathcal{C}(0, v) = \frac{1}{n} \sum_{i = 1}^n C_i (0, v) = \frac{1}{n} \sum_{i = 1}^n 0 = 0.
	\end{equation*}
If $u$ or $v$ is $1$, say $u = 1$, one can show that 
	\begin{equation*}
		\mathcal{C}(1, v) = \frac{1}{n} \sum_{i = 1}^n C_i (1, v) = \frac{1}{n} \sum_{i = 1}^n v = v.
	\end{equation*}
For any $(a_1, a_2), (b_1, b_2) \in [0, 1]^2$ with $a_1 \le b_1$ and $a_2 \le b_2$, we have 
	\begin{multline*}
		\sum_{i_1 = 0}^1 \sum_{i_2 = 0}^1  (-1)^{i_1 + i_2} \mathcal{C}(x_{1 i_1}, x_{2 i_2}) = \sum_{i_1 = 0}^1 \sum_{i_2 = 0}^1 (-1)^{i_1 + i_2} \Big ( \frac{1}{n} \sum_{j = 1}^n C_j (x_{1 i_1}, x_{2 i_2}) \Big ) \\ = \frac{1}{n} \sum_{j = 1}^n \Big ( \sum_{i_1 = 0}^1 \sum_{i_2 = 0}^1 (-1)^{i_1 + i_2}  C_j(x_{1 i_1}, x_{2 i_2})\Big) \ge 0,
	\end{multline*}
	where $x_{j0} = a_j$ and $x_{j1} = b_j$ for all $j \in \{1, 2\}$. All of this proves that $\mathcal{C}$ is a copula. Note that the result should be easily extended to any mixture of bivariate copulas.
\end{proof}

\subsection{\textit{Proof of Proposition~\ref{prop:symc}}}\label{A:prop4}
 Assume that $(U,V)$ is a $C$-distributed random vector. Considering the random vector $(V, U)$, we have 
\begin{equation*}
    \mathds{P}(V \le u, U \le v) =   \mathds{P} (U \le v, V \le u) = C(v, u), \hspace{5mm} \forall (u, v) \in [0, 1]. 
\end{equation*}
It follows from Sklar's theorem that the bivariate function $\mathcal{C}$ defined, for all $(u, v) \in [0, 1]$, by $\mathcal{C} (u, v) = C(v, u)$ is a copula. Thus, Proposition~\ref{prop:symc} ensures that $\tilde{C}$ is also a copula. As for the reflection symmetric structure, it is clear by the associativity of addition. 

\subsection{\textit{Proof of Proposition~\ref{prop:rsymc}}}\label{A:prop5}

Proposition~\ref{prop:symc} ensures that $\tilde{C}$ is a bivariate copula. Moreover, one can easily notice
	\begin{align*}
		&\tilde{C}(u, v) - \tilde{C} (1- u, 1- v) \\
		 &=  \frac{1}{2} \Big ( C(u, v) + C(1 - u, 1-v) + u + v - 1 \Big) \\
		 & \hspace{3cm}- \frac{1}{2} \Big ( C(1- u, 1- v) + C(u, v) + (1 - u) + ( 1 - v) - 1 \Big) \\ 
		 &= u + v - 1,  
	\end{align*}
	which tells us that $\tilde{C}$ is radially symmetric. 

\subsection{\textit{Proof of Proposition~\ref{prop:jsymc}}}\label{A:prop6}
To justify that $\tilde{C}$ is a copula, it suffices to show that the bivariate functions $C_1$ and $C_2$ given, for all $(u, v) \in [0, 1]$, by 
\begin{equation*}
    C_1 (u, v) = u - C(u, 1 - v) \text{ \hspace{2mm}and\hspace{2mm} }
    C_2 (u, v) = v - C(1 - u, v)
\end{equation*}
are both copulas. Let $(U, V)$ be a $C$-distributed random vector. Consider the random vector $(U, 1 - V)$. For all $(u, v) \in [0, 1]$, we have 
\begin{multline*}
    \mathds{P}(U \le u, 1 - V \le v) = \mathds{P}(U \le u, 1 - V \le v) \\
    = \mathds{P} (U \le u) - \mathds{P}(U \le u, V \le 1 - v) = u - C(u, 1-v), \hspace{5mm} \forall (u, v) \in [0, 1]. 
\end{multline*}
It follows from Sklar's theorem that $C_1$ is a copula; similarly, $C_2$ can be shown to be the copula of $(1 - u, v)$. Finally, one can check that 
	\begin{align*}
		\tilde{C}&(u, v) + \tilde{C} (u,  1- v)  \\ 
		&= \frac{1}{4} \Big ( C(u, v) + u - C(u, 1 - v) + v - C(1-u, v) + u + v - 1 + C(1- u, 1- v ) \Big)  \\ 
		& \hspace{0.5cm} + \frac{1}{4} \Big ( C(u, 1 - v) + u - C(u, v) + 1 - v - C(1-u, 1 - v) \\
		& \hspace{10cm}+ u + 1 - v - 1 + C(1- u, v ) \Big)  \\ 
		&= \frac{1}{4} (2u + 2v - 1 + 2u - 2v +1 ) = u
	\end{align*} 
	and similarly verify $\tilde{C} (u, v) + \tilde{C}(1- u, v) = v$. With all of this, it can be concluded that $\tilde{C}$ is a jointly symmetric copula. 
\\

\end{document}